\newcommand{\cJ}{\mathcal J}
\newcommand{\cK}{\mathcal K}
\newcommand{\cN}{\mathcal N}
\newcommand{\cP}{\mathcal P}
\newcommand{\cS}{\mathcal S}
\newcommand{\bbZ}{\mathbb{Z}}
\newcommand{\vu}{\textbf{u}}
\newcommand{\vx}{\textbf{x}}
\newcommand{\vz}{\bm{z}}
\newcommand{\bmat}{\begin{bmatrix}}
\newcommand{\emat}{\end{bmatrix}}
\newcommand{\bsmat}{\begin{bsmallmatrix}}
\newcommand{\esmat}{\end{bsmallmatrix}}
\newcommand{\iter}[2]{{#1}^{(#2)}}
\newcommand{\st}{\text{s.t.}}
\newcommand{\revised}[1]{\textcolor{black}{#1}}
\newcommand{\secrevised}[1]{\textcolor{black}{#1}}
\providecommand{\norm}[1]{\left\lVert#1\right\rVert}
\newtheorem{remark}{Remark}
\newtheorem{theorem}{Theorem}
\newtheorem{assumption}{Assumption}
\newtheorem{proposition}{Proposition}
\def\BibTeX{{\rm B\kern-.05em{\sc i\kern-.025em b}\kern-.08em
    T\kern-.1667em\lower.7ex\hbox{E}\kern-.125emX}}
\begin{document}
\title{
\secrevised{Recursively Feasible} Chance-constrained Model Predictive Control under Gaussian Mixture \\ Model Uncertainty}
\author{Kai Ren, Colin Chen, Hyeontae Sung, Heejin Ahn, Ian Mitchell and Maryam Kamgarpour	\thanks{Ren and Kamgarpour are with the SYCAMORE Lab, École Polytechnique Fédérale de Lausanne (EPFL), Switzerland (e-mail: {\tt\small kai.ren@epfl.ch; maryam.kamgarpour@epfl.ch}). Chen and Mitchell are with the Department of Computer Science, The University of British Columbia, Vancouver, BC, Canada (e-mail: {\tt\small mitchell@cs.ubc.ca; colinc@fastmail.com}). Sung and Ahn are with the School of Electrical Engineering, KAIST, South Korea (e-mail: {\tt\small hyeontae.sung@kaist.ac.kr; heejin.ahn@kaist.ac.kr}).}
\thanks{
Ren and Kamgarpour's research is supposed by Swiss National Foundation Grant $\#200020\_207984 \slash  1$. Mitchell, Chen, and Ahn were partially supported by National Science and Engineering Research Council of Canada (NSERC) Discovery Grant \#RGPIN-2017-04543. Sung and Ahn were partially supported by the Ministry of Science and ICT, Korea, under the Information Technology Research Center support program (IITP-2024-RS-2023-00259991) supervised by the Institute for Information \& Communications Technology Planning \& Evaluation.}
}

\maketitle
\begin{abstract}
We present a chance-constrained model predictive control (MPC) framework under Gaussian mixture model (GMM) uncertainty. Specifically, we consider the uncertainty that arises from predicting future behaviors of moving obstacles, which may exhibit multiple modes (for example, turning left or right). To address the multi-modal uncertainty distribution, we propose three MPC formulations: nominal chance-constrained planning, robust chance-constrained planning, and contingency planning. We prove that closed-loop trajectories generated by the three planners are safe. The approaches differ in conservativeness and performance guarantee. In particular, the robust chance-constrained planner is recursively feasible under certain assumptions on the propagation of prediction uncertainty. On the other hand, the contingency planner generates a less conservative closed-loop trajectory than the nominal planner. We validate our planners using state-of-the-art trajectory prediction algorithms in autonomous driving simulators. 
\end{abstract}

\begin{IEEEkeywords}
Autonomous vehicles; Stochastic optimal control; Trajectory planning.
\end{IEEEkeywords}

\section{Introduction} \label{sec:introduction}

\IEEEPARstart{A}{utonomous} systems, such as self-driving cars, face significant challenges when operating in dynamic environments. For instance, an autonomous vehicle must plan its trajectory while avoiding collisions with other road agents, such as other vehicles and pedestrians. The future movements of other road agents are often multi-modal, for example, a nearby vehicle may go straight or turn at intersections. Planning safe trajectories in scenarios involving multi-modal behavior of road agents (e.g., intersection, overtaking) is particularly challenging yet essential for real-world applications. Our work focuses on developing a provably safe and computationally tractable trajectory planning approach in the presence of multi-modal uncertainty.


\subsection{Related Works}

Risk-constrained trajectory planning is a common formulation to ensure safety in uncertain environments. This formulation, rather than enforcing constraints for all uncertainty realizations, tolerates constraint violation up to a given threshold in terms of a chosen risk metric. The chance constraint is a common risk metric that restricts the probability of constraint violation. However, the chance constraint formulation is unaware of the potential severity of constraint violation. This has motivated the use of conditional value-at-risk (CVaR) \cite{Majumdar2017, Hakobyan2019, Barbosa2021}, which accounts for the expected amount of constraint violation. Both chance-constrained and CVaR-constrained trajectory planning problems are tractable when constraints are linear, and the uncertainty follows a (uni-modal) Gaussian distribution \cite{ Calafiore2006CC, Blackmore2011,jha-jar18, ahmadi2011}. 

In the context of autonomous driving, the trajectory prediction models \cite{Salzmann2020, Rhinehart2019} show that in complex and interactive environments, the probability distributions over the future positions of the vehicles are multi-modal. These past works have used a Gaussian mixture model (GMM) to depict the multi-modal uncertainty. Motivated by the prevalence of GMM in trajectory predictions, our work focuses on developing a framework for safe trajectory planning under GMM uncertainty. When the moments of each mode of GMM uncertainty are known and constraints are linear, chance-constrained problems can be tractably solved \cite{Hu2022}, while CVaR-constrained problems can be addressed by an iterative cutting-plane \cite{You2021GMMCVaR} approach.

However, the exact distribution may be unknown in practice, and only samples (e.g., sensor data) may be available. Without any prior knowledge of the uncertainty's distribution, CVaR-constrained problems have been tractably addressed by a sample average approximation \cite{Hakobyan2019} without any theoretical approximation guarantees. On the other hand, a distributionally robust approach with Wasserstein distance ambiguity set \cite{Esfahani2015DRO} can be applied to trajectory planning \cite{Hakobyan2020}. Under certain assumptions on the risk constraints, a finite-sample safety guarantee has been derived for this approach. With partial knowledge of the uncertainty's distribution, e.g., the uncertainty is multi-modal and the number of modes is known, \cite{Ahn_2022} extends the scenario approach \cite{Calafiore2006Scenario, campi2009} to address trajectory planning under multi-modal uncertainty. On the other hand, \cite{Ren2023} estimates the GMM moments by clustering the samples and robustifies the risk constraints against estimation error using moment concentration bounds. However, in the context of trajectory planning, both our prior studies \cite{Ahn_2022, Ren2023} are based on an open-loop framework.

In autonomous vehicle trajectory planning, overly cautious behavior or infeasibility issues are frequently observed in fast-changing and highly uncertain environments. Model Predictive Control (MPC) \cite{Garcia1989MPC, schwarm1999, Mesbah2016} address the problem by solving the trajectory planning problem in a receding-horizon manner. Leveraging the advancements in highly accurate sensors and real-time forecasting models, one can plan the ego vehicle's (EV's) trajectory with frequently updated observations and predictions in a closed-loop fashion. Past works on autonomous driving developed MPC under GMM uncertainty \cite{Nair2022}, showcasing its efficacy through simulations. However, no theoretical guarantees were provided.

One major obstacle hindering the safety of MPC is the recursive feasibility problem. It requires the existence of a feasible control input for each time step; the absence of such a solution indicates a potential violation of the safety constraints. This property has been well-studied in MPC under deterministic settings \cite{Kerrigan2000}. Among stochastic MPC frameworks, some previous works \cite{Schildbach2014} make direct recursive feasibility assumptions. In contrast, when the uncertainty has bounded support, other research focuses on proving recursive feasibility \cite{Kouvaritakis2021} or identifying safe control invariant sets \cite{Kuwata2009} where the system can deterministically remain feasible for an indefinite period of time. However, defining such sets becomes difficult for risk-constrained trajectory planning with unbounded GMM uncertainties.

\subsection{Contributions}
We develop an MPC framework for autonomous driving under GMM uncertainty with a provable safety guarantee. Our main contributions are:
\begin{itemize}
    \item Developing a recursively feasible and safe MPC framework. To this end, we develop conditions on the propagation of the GMM prediction over planning steps, and tighten the constraints of the MPC accordingly to robustify against the prediction propagation;

    \item Formulating an alternative planning scheme based on contingency planning \cite{hardy2013},\revised{\cite{Alsterda2019contMPC},} for reducing conservatism of chance-constrained MPC under GMM uncertainty while preserving the safety guarantees;

    \item Validating our methods by simulating with the state-of-the-art trajectory prediction model \textit{Trajectron++} \cite{Salzmann2020} in an autonomous driving simulator CARLA \cite{Dosovitskiy17}. 
\end{itemize}

The rest of the paper is organized as follows. In Section \ref{sec:CLTP}, we formulate the chance-constrained MPC problems with GMM uncertain parameters and present a planning framework combining shrinking and receding horizon planning schemes. Section \ref{section:unifiedPlanning} presents an MPC scheme and develops the recursive feasibility and safety guarantees under assumptions on the GMM predictions. Furthermore, we present an alternative contingency planning scheme for reducing conservatism without losing the safety guarantee. Section \ref{sec:experiment} demonstrates our methods on a state-of-the-art autonomous driving simulator.

\textit{Notation:} A Gaussian distribution with mean $\mu$ and covariance matrix $\Sigma$ is denoted as $\mathcal{N}(\mu, \Sigma)$. By $\text{\textPsi}^{-1}(\cdot)$, we denote the inverse cumulative distribution function of the standard Gaussian distribution $\mathcal{N}(0, 1)$. We denote a set of consecutive integers $\{a, a+1, \dots,b\}$ by $\mathbb{Z}_{a:b}$. We denote the conjunction by $\bigwedge$ and the disjunction by $\bigvee$. The weighted \(L_2\) norm of a vector \( x \) with a weighting matrix \( P \succ 0\) is $\|x \|_{P} = \sqrt{x^T P x}$.

\section{Problem Formulation} \label{sec:CLTP}

Our goal is to find the optimal ego vehicle (EV) state trajectory that minimizes a specific cost function (e.g., fuel consumption or distance to the target) while ensuring a collision-free path in the presence of other vehicles (OVs) throughout the planning horizon. The EV is modeled as a deterministic linear time-varying system
\begin{equation} \label{eq:systemdynamic}
    x_{t+1} = A_t x_t + B_t u_t,
\end{equation}
where $x_t \in \mathbb{R}^{n_x} $ and $u_t \in \mathbb{R}^{n_u}$ denote the state and input at time $t$, and $A_t \in \mathbb{R}^{n_x \times n_x}$ and $B_t \in \mathbb{R}^{n_x \times n_u}$ are system's dynamics matrices. To comply with road rules such as staying within lanes and adhering to speed limits, the state and control input are confined within time-invariant convex sets,
\begin{equation} \label{eq:xuconstraints}
x_t \in \mathcal{X} \text{ and } u_t \in \mathcal{U}.
\end{equation}

Given an initial state $x_0$ and a planning horizon $T$, the trajectory planning framework generates a sequence of control inputs $\textbf{u} = (u_0, u_1, \dots, u_{T-1})$. Given the initial state $x_0$ and the system dynamics \eqref{eq:systemdynamic}, the input sequence $\textbf{u}$ would lead to a state trajectory $\textbf{x} = (x_0, x_1, \dots, x_{T})$ that should be free from collisions with $J$  other vehicles (OVs) while satisfying the constraints \eqref{eq:xuconstraints}. The $j$-th OV is modeled as a polytope with $I_j$ edges. For brevity, we use $\mathcal{X}_{t,j}^{safe}$ to denote the set of the EV states that lie outside of the polytope corresponding to the $j^{th}$ OV at time $t$. This collision avoidance set can be represented as the EV states that lie beyond at least one edge of the OV polytope. Hence, $\mathcal{X}_{t,j}^{safe}$ can be written as a disjunction of linear constraints: $x_{t} \in \mathcal{X}_{t,j}^{safe} \Leftrightarrow \bigvee_{i=1}^{I_j} \delta_{ij}^{t^\intercal} \Tilde{x}_t \leq 0$, where $\Tilde{x}_t = [x_t^{\intercal} \;\; 1 ]^{\intercal}$ and the uncertainty $\delta_{ij}^{t} \in \mathbb{R}^{n_x+1}$ corresponds to the $i^{th}$ face of the $j^{th}$ OV at time $t$. A 2-dimensional space example, i.e., $n_x = 2$, is shown in Fig.~\ref{fig:linearconstraint}, where the EV states lying in the shaded orange region are considered to be free from collision with the OV. 

\begin{figure}[h]
    \centering \vspace{-1mm}
    \includegraphics[width=6.5cm]{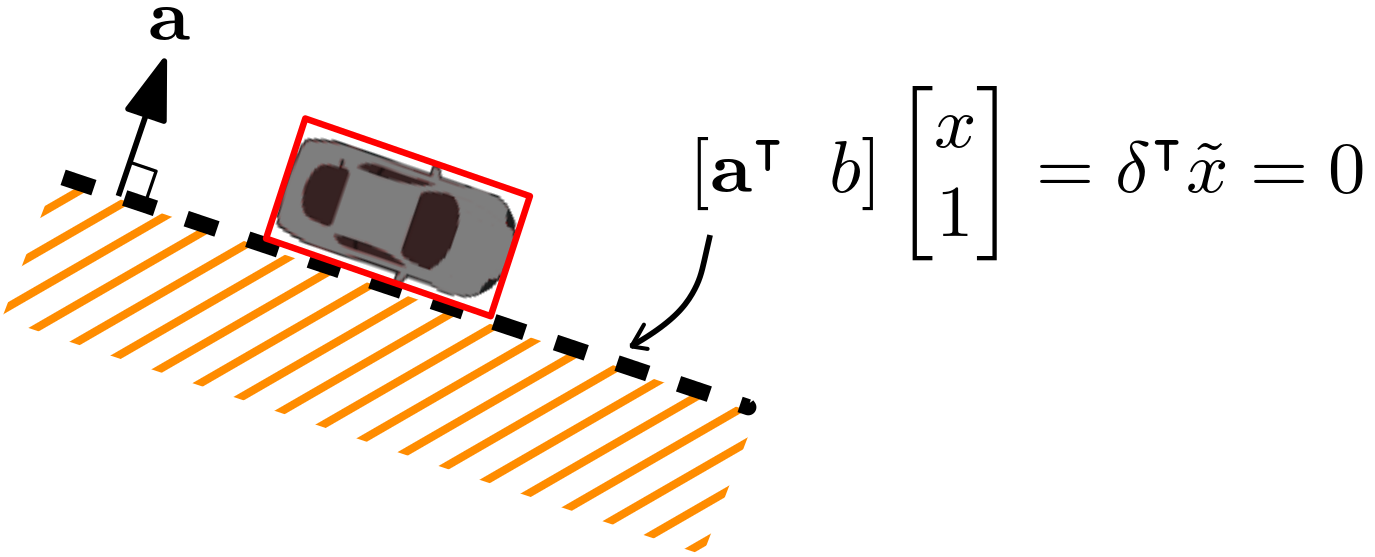}
    \caption{Polyhedral OV and linear constraint for collision avoidance in 2-dimensional space.}
    \label{fig:linearconstraint}
\end{figure}

To capture the uncertain and multi-modal behavior of the OVs  (e.g., an OV may go straight or turn at an intersection), we use a GMM to model the underlying distribution of the OV uncertainty $\delta_{ij}^{t}$ and enforce the following assumption.
\begin{assumption} \label{assumption:guassianConstraint}
    \textbf{[GMM uncertainty]}
    The multi-modal uncertainty $\delta_{ij}^{t}$ is GMM distributed with $K_j$ modes of behavior, i.e., $\delta_{ij}^{t} \sim \sum_{k=1}^{K_j} \pi_k \cdot \cN (\mu_{ijk}^{t}, \Sigma_{ijk}^{t})$.
\end{assumption}

\subsection{Selection of Risk-constrained Formulation} \label{subsection:riskSelection}

To address the GMM uncertainty, we consider a risk-constrained formulation to ensure safety. To choose an appropriate risk metric and approach, we investigate various different risk-constrained methods. \revised{For the risk metric, we consider a chance constraint or a CVaR constraint. For each of the two risk formulations, we apply existing tractable approaches based on the assumptions on the uncertainty's distribution. In particular, we explore cases where the distribution is fully known, partially known, or completely unknown. We evaluate the performances of the risk-constrained approaches based on a simple yet instructive example, consisting of a single risk-constrained optimization problem with a linear constraint function. The detailed descriptions of the optimization problem, the risk-constrained formulations and the comparative analysis on their performances are detailed in Appendix~\ref{Appendix:risk-constrained}}.

The key insight from this simple numerical study is that for the multi-modal uncertainty, the methods which employ chance constraint as a risk metric and incorporate distribution information, tend to yield less conservative solutions while ensuring computational efficiency. In contrast, enforcing CVaR constraints or distributional robustness may lead to an excessively conservative solution and significantly extend computation time. The safety constraint in autonomous car trajectory planning problems can be defined as a conjunction of single risk constraints for tractability \cite{Lefkopoulos2021, Hakobyan2019, dixit2021}, and we expect that the performances of the risk-constrained methods discussed in this subsection for a single risk constraint can be extended to such trajectory planning problems. To ensure that our trajectory planning framework is not overly conservative and to consider real-time planning for autonomous driving, we focus on the chance-constrained formulation in this work. Furthermore, our work focuses on ensuring recursive feasibility in MPC. \secrevised{Developing recursive feasibility conditions with data-driven methods, such as scenario approach \ref{num:scenario}, is challenging without a parameterized uncertainty distribution.} Therefore, we investigate the propagation of the uncertainty's distribution model (i.e., GMM) over planning steps and adopt CC-MTA \ref{num:MTA} and CC-MRA \ref{num:MRA} to handle the chance constraints. 

\subsection{Chance-constrained Planning} \label{subsection:shrinkingReceding}
 
Recall the definition of the safe set of EV states, denoted as $\mathcal{X}_{t,j}^{safe}$; it represents the set of EV states that are free from collision with the $j^{th}$ OV at time $t$. To capture the collision avoidance in trajectory planning with $J$ OVs over a planning horizon $T$, we formulate a \textit{joint} chance constraint
\begin{equation} \label{eq:MPCchance}
    \mathbb{P}\left(\bigwedge_{t=1}^{T} \bigwedge_{j=1}^{J} x_{t} \in \mathcal{X}_{t,j}^{safe} \right) \geq 1-\epsilon.
\end{equation}

 The joint chance constraints can be transformed into tractable deterministic constraints. In particular, we first use the Big-M method \cite{Schouwenaars2001} to handle the non-convexity of the disjunctive collision avoidance constraint (as illustrated in Fig.~\ref{fig:linearconstraint}): $x_{t} \in \mathcal{X}_{t,j}^{safe} \Leftrightarrow \bigvee_{i=1}^{I_j} \delta_{ij}^{t^\intercal} \Tilde{x}_t \leq 0 \Leftrightarrow \bigwedge_{i=1}^{I_j} \delta_{ij}^{t^\intercal} \Tilde{x}_t \leq Mz_{ij}^{t}$, where $M$ is a positive number that is larger than any possible value of the linear constraint and $z_{ij}^{t} \in \{0,1\}$ is a binary variable. Then, we use Boole's inequality \cite{CaseBerg:01} to decouple the joint chance constraint into multiple single chance constraints. Based on Assumption~\ref{assumption:guassianConstraint}, $\delta_{ij}^{t}$ conforms to a GMM. Given the moments $\mu_{ijk}^t$ and $\Sigma_{ijk}^t$, either known or estimated from samples, the single chance constraints can then be equivalently reformulated into second-order conic constraints. The resulting constraints are as follows.
\begin{subequations} \label{eq:chanceGMM}
    \begin{alignat} {2}
    & \bigwedge_{t=1}^{T} \bigwedge_{j=1}^{J} \bigwedge_{i=1}^{I_j} \bigwedge_{k=1}^{K_j} \Gamma_{ijk}^{t} \sqrt{ \Tilde{x}_t^{\intercal}{\Sigma_{ijk}^{t}} \Tilde{x}_t} + \Tilde{x}_t^{\intercal}\mu_{ijk}^{t} \leq Mz_{ijk}^t, \label{eq:OLconjunctionCC} \\
     & \sum_{k=1}^{K_j} \pi_k \epsilon_{ijk}^{t} = \epsilon_{ij}^{t} = \epsilon/(TJ). \label{constraint:RAtj} \\
    & \sum_{i=1}^{I_j} z_{ijk}^{t} = I_j - 1, \;\;\; z_{ijk}^{t} \in \{0, 1\},\label{constraint:RAandBinary}
    \end{alignat}
\end{subequations}
where $\Gamma_{ijk}^{t} := \text{\textPsi}^{-1}(1-\epsilon_{ijk}^{t})$. Note that the reformulation is conservative (i.e., $\eqref{eq:chanceGMM} \Rightarrow \eqref{eq:MPCchance}$) due to the uniform allocation of $\epsilon$ to all OVs and timesteps \cite[Lemma 1]{Lefkopoulos2021}. 

Previous work \cite{Ahn_2022, Ren2023} used an open-loop trajectory planning approach, which only supports a short planning horizon. In particular, when the planning horizon $T$ increases, the OV prediction becomes highly uncertain such that no feasible solution exists. For instance, our previous experimental results showed that using the \textit{Trajectron++} model \cite{Salzmann2020} on real-world intersection data to forecast a vehicle's position 4 seconds ahead results in a substantial state variation of up to 28 meters, posing a significant challenge to planning feasible trajectories for an EV. To address the infeasibility issues of open-loop planning over a long planning horizon, \cite{Nair2022} proposed an MPC framework under GMM uncertainty. However, the issues of recursive feasibility and safety guarantees were not addressed. 


To this end, our goal is to develop an MPC framework under GMM uncertainty with provable recursive feasibility and safety guarantees. In the rest of the paper, we use $\tau$ to denote the MPC planning steps and $T_{OH}$ to denote the open-loop planning horizon at each planning step. At each $\tau \in \{0, 1, \ldots, T-1\}$, we update the GMM moment predictions: $(\mu_{ijk}^{t|\tau}, \Sigma_{ijk}^{t|\tau})$ for future time steps $t = \tau+1, \tau+2, \ldots, T_{OH}$ based on the prediction module or sample estimation. Utilizing the predictions, we iteratively solve an open-loop trajectory planning problem up to $T_{OH}$. 

We consider two common MPC planning horizon settings, as shown in Figure \ref{fig:CLstatemachine}.

\begin{itemize}
    \item \textbf{Shrinking-horizon:} $T_{OH} = T$, where we compute trajectories up to the end of the closed-loop planning horizon at each planning step $\tau$. 

    \item \textbf{Receding-horizon:} $T_{OH} = \tau + T_s$, where we plan a trajectory for the upcoming $T_s$ timesteps ($T_s < T$) at each planning step. 
\end{itemize}

\begin{figure}[!t]
     \centering
     \includegraphics[width=7cm]{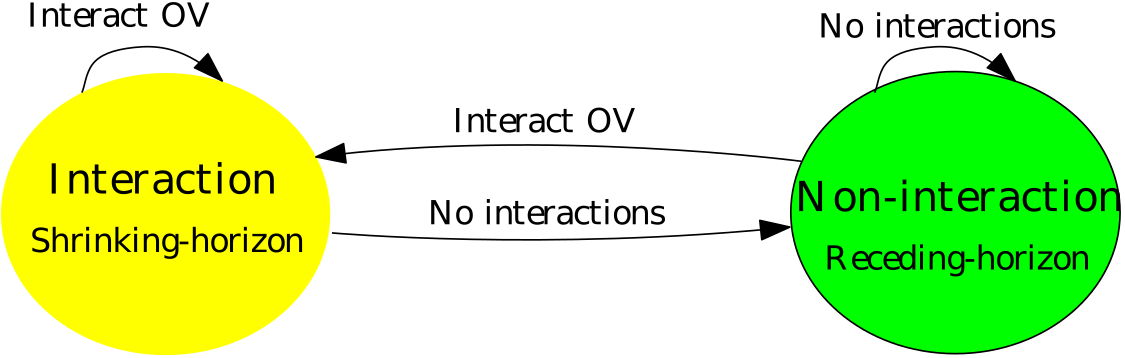}
     \caption{High-level state machine for MPC combining the shrinking-horizon and receding-horizon settings.}
     \label{fig:CLstatemachine}
 \end{figure}

The shrinking-horizon setting is employed when the EV needs to complete a maneuver, such as a turn or a lane merge, within a finite time window and in the presence of OVs. 
Receding-horizon planning allows a longer closed-loop planning horizon and is used in non-interactive scenarios, such as lane keeping and car following. The transition from a non-interaction state to an interaction state occurs when the EV has a fixed-horizon goal in the presence of OVs.
The occurrence of the state transition can be determined by the sensor observations.
Since the OV-involved scenarios present the most challenge regarding uncertainties in future trajectories of OVs, our focus will be on deriving an MPC algorithm with provable recursive feasibility and safety in the shrinking-horizon setting.

\section{Chance-constrained MPC} \label{section:unifiedPlanning}
 At every MPC planning step $\tau = 0, 1, \ldots, T-1$, we solve a trajectory planning problem $\cP$, to be specified in the following sections. The solution to the planning problem is the control input $\textbf{u}_{\tau} = (u_{\tau|\tau}, u_{\tau+1|\tau}, \dots, u_{T_{OH}-1|\tau})$ and open-loop state trajectory $\textbf{x}_{\tau} = (x_{\tau+1|\tau}, x_{\tau+2|\tau}, \dots, x_{T_{OH}|\tau})$. We actuate the EV with the first control input $u_{\tau|\tau}$ and proceed to the next time step. The algorithm that implements MPC over a closed-loop planning horizon $T$ is outlined in the pseudocode below.
\begin{algorithm}[H]
\begin{minipage}{0.5\textwidth}
\caption{Chance-constrained MPC}
\label{alg:MPCplanning}
\begin{algorithmic}[1]
\State \textbf{Initialize:} initial state $x_0$, $\tau=0$, $T_{OH}$,  problem $\cP$

\For{$\tau = 0, 1, 2, \ldots, T-1$}
\State \textbf{Obtain} GMM prediction: $(\mu^{t|\tau}, \Sigma^{t|\tau}),\; \forall t \in \mathbb{Z}_{\tau+1:T_{OH}}$
\State \textbf{Solve} $\cP$ \label{alg:open-loop}
\If{feasible}
\State \textbf{Obtain: } $\textbf{u}_\tau$, $\textbf{x}_\tau$
\State Actuate EV with $u_{\tau|\tau}$, reach $x_{\tau+1|\tau}$
\Else
\State \Return Infeasible \label{algState:infeasible}
\EndIf
\EndFor
\State \Return $\textbf{x} = \left(x_{(1|0)}, x_{(2|1)}, x_{(3|2)}, \ldots, x_{(T|T-1)} \right)$
\end{algorithmic}
\end{minipage}
\end{algorithm} 

In the following, we present three approaches to formulating $\cP$ in Algorithm~\ref{alg:MPCplanning}. In subsection \ref{subsection:nominalPlanning}, we present a nominal planner, which is a naive deterministic reformulation of chance-constrained MPC under GMM. In subsection \ref{sec:robustPlanning}, we develop a robust MPC planner that implements tightened chance constraints and is proved to be recursively feasible under certain assumptions on the propagation of the GMM predictions across planning steps. Inspired by \cite{hardy2013}, in subsection \ref{section:contingencyPlanning}, we apply the concept of contingency planning to generate less conservative trajectory than the nominal and robust planners.

\subsection{Nominal Planning} \label{subsection:nominalPlanning}
Our first approach to formulating $\cP$ in Algorithm \ref{alg:MPCplanning} is based on the deterministic formulation of the chance constraints as introduced in \eqref{eq:OLconjunctionCC}.
\begin{subequations} \label{problem:CLCCTPproblem}
    \begin{alignat} {2}
        & \underset{\textbf{u}_{\tau}, \; \textbf{z}_{\tau}}{\text{min}}
        & \quad & \cJ(\textbf{u}_{\tau}, \textbf{x}_{\tau}) \\
        & \;\;\; \text{s.t.}
        & & x_{t+1} = A_t x_t + B_t u_t,  \;\;\; \forall t \in \mathbb{Z}_{\tau: (T_{OH}-1)}, \label{CL:dynamics}\\
        & & & \Gamma^{t} \sqrt{ \Tilde{x}_{t|\tau}^{\intercal}{\Sigma^{t|\tau}} \Tilde{x}_{t|\tau}} + \Tilde{x}_{t|\tau}^{\intercal}\mu^{t|\tau} \leq Mz^{t|\tau}, \notag\\ 
        & & & \;\;\;\;\;\;\;\;\;\;\;\;\;\;\;\;\;\;\;\;\;\;\;\;\;\;\;\;\;\;\;\;\;\;\;\;\;\;\;\;  \forall t\in \mathbb{Z}_{(\tau+1):T_{OH}}, \label{CL:constraint} \\
        &&& \eqref{eq:xuconstraints}, \eqref{constraint:RAtj}, \eqref{constraint:RAandBinary}. \notag 
\end{alignat}
\end{subequations}
Note that in \eqref{CL:constraint} and for the rest of this section, for brevity, we have excluded the sub-indices corresponding to OV $j$, mode number $k$, and the OV faces $i$ (see \eqref{eq:OLconjunctionCC}). In particular, constraint \eqref{CL:constraint} must be satisfied for all $j \in \mathbb{Z}_{1:J}, \; k \in \mathbb{Z}_{1:K_j}, \; i \in \mathbb{Z}_{1:I_j}$. 

\revised{
\begin{remark} \label{remark:RHfeasibility} \textbf{[Receding-horizon recursive feasibility]}
    A major challenge in MPC is to ensure recursive feasibility (i.e., Algorithm \ref{alg:MPCplanning} does not go to line \ref{algState:infeasible}). In the receding-horizon state, we consider non-interactive scenarios. First, let us consider cases when the constraints related to the OVs are nonexistent, that is, \eqref{constraint:RAtj}, \eqref{constraint:RAandBinary}, \eqref{CL:constraint} do not exist. In such cases, Algorithm \ref{alg:MPCplanning} is recursively feasible when the terminal state $x_T$ lies in a terminal set that is control invariant for the EV dynamics \eqref{eq:systemdynamic} subject to the constraint \eqref{eq:xuconstraints} \cite[Thereom 12.1]{Borrelli2017}. In an alternative cases when the uncertainty admits a bounded support,
    the recursive feasibility can be ensured by using the robust constraint tightening against the worst-case uncertainty realization \cite{Lorenzen2017, Kouvaritakis2021}. In our MPC framework, we employ receding-horizon MPC when there exists no OV-related constraint. We assume that a control invariant terminal state set exists, and thus, the receding-horizon MPC is recursively feasible.   
\end{remark}
}

In autonomous driving scenarios where \secrevised{the EV needs to complete a maneuver within a finite time window and }the OV states are uncertain, we employ shrinking-horizon planning. In the following subsection, we explore conditions on the growth of the GMM uncertainty along the planning steps and robustification of chance constraints against the worst-case uncertainty propagation. Under these conditions, we can ensure the recursive feasibility of shrinking-horizon MPC.

\subsection{Robust Planning for Recursive Feasibility} \label{sec:robustPlanning}
\secrevised{
In this subsection, we explore the conditions for GMM uncertainty propagation and tighten the chance constraints \eqref{CL:constraint} to ensure recursive feasibility. The idea is an intuitive driving behavior during interactions with the OVs. At the beginning of an interaction, we consider the worst-case behaviors of the OVs. As time progresses, the future states and driving patterns of the OVs become clearer, making it easier for the EV to remain feasible and decide how to drive.}

During interactions between the EV and the OVs, our empirical findings from the forecasting model show the rather intuitive fact that prediction uncertainty reduces over time with updated observations of an OV. Moreover, the OV's multi-modal behavior, such as the possibility of going straight or turning at intersections, gets reduced to one mode with ongoing observation over time. These observations motivate the following assumptions about the shrinkage of the OV prediction uncertainty in interactive scenarios. These assumptions are, in turn, used to ensure the recursive feasibility of a shrinking-horizon MPC problem.  
\begin{assumption} \label{assumption:reducingModeNum}
    The number of the prediction modes does not increase with planning steps, that is, $K^{\tau} \geq   K^{\tau+1}$.
\end{assumption}

To quantify the difference in the prediction at time $\tau$ and $\tau +1$ of the OV position at time $t$, we investigate the propagation of the predicted GMM moments corresponding to each face of each OV under each mode of behavior. Let us define the Frobenius norm of the uncertainty's covariance matrix between two consecutive time steps as
\begin{equation} \label{eq:covShrink}
    \sqrt{\norm{\Sigma^{t|\tau}}_F} - \sqrt{\norm{\Sigma^{t|\tau+1}}_F} = g^t(\tau).
\end{equation}
The shift of the mean prediction between two consecutive time steps is denoted by
\begin{equation} \label{eq:meanShift}
        \norm{\mu^{t|\tau} - \mu^{t|\tau+1}}_2 = h^t(\tau).
\end{equation}
\begin{assumption}\label{assumption:VaRshrink} 
For any  $t \in \mathbb{Z}_{2:T}$ and planning steps $\tau \in \mathbb{Z}_{0:t-1}$, the shift of the mean prediction is bounded by the covariance shift as follows: \label{subass:covBoundMean}
    \begin{equation} \label{eq:covBoundmean}
        h^t(\tau) \leq \Gamma^{t} \cdot g^t(\tau),
    \end{equation}
where $\Gamma^t=\text{\textPsi}^{-1}(1-\epsilon^{t})$ is the $(1-\epsilon^t)$ quantile of the standard Gaussian distribution. 
\end{assumption}

To ensure recursive feasibility, we introduce the following robust planning scheme $\cP$ in Algorithm \ref{alg:MPCplanning}.
\begin{subequations} \label{problem:RobustMPC}
    \begin{alignat} {2}
        & \underset{\textbf{u}_{\tau}, \; \textbf{z}_{\tau}}{\text{min}}
        & \quad & \cJ(\textbf{u}_{\tau}, \textbf{x}_{\tau}) \\
        & \;\;\; \text{s.t.}
        & & \Gamma^{t}  \sqrt{\norm{\Sigma^{t|\tau}}_F} \cdot \norm{\Tilde{x}_{t|\tau}}_2 + \Tilde{x}_{t|\tau}^{\intercal}  {\mu^{t|\tau}} \leq M z^{t|\tau}, \notag\\
        &&& \;\;\;\;\;\;\;\;\;\;\;\;\;\;\;\;\;\;\;\;\;\;\;\;\;\;\;\;\;\;\;\;\;\;\;\;\;\;\; \;\;\;\;\;\forall t\in \mathbb{Z}_{(\tau+1):T_{OH}}, \label{eq:worstVaRconstraint}\\
        &&& \eqref{eq:xuconstraints}, \eqref{CL:dynamics}, \eqref{constraint:RAtj}, \eqref{constraint:RAandBinary}.
\end{alignat}
\end{subequations}
Note that a feasible solution of \eqref{problem:RobustMPC} is also a feasible solution of \eqref{problem:CLCCTPproblem}, because \eqref{eq:worstVaRconstraint} is a conservative approximation of \eqref{CL:constraint} based on Cauchy–Schwarz inequality,
\begin{equation} \label{eq:Cauchy}
    \sqrt{ \Tilde{x}_{t|\tau}^{\intercal}{\Sigma^{t|\tau}} \Tilde{x}_{t|\tau}} \leq \sqrt{\norm{\Sigma^{t|\tau}}_F} \cdot \norm{\Tilde{x}_{t|\tau}}_2.
\end{equation}


\begin{proposition}\textbf{[Shrinking-horizon recursive feasibility]} \label{theorem:recursivefeasibility}
Under Assumptions~\ref{assumption:guassianConstraint}--\ref{assumption:VaRshrink}, the existence of a feasible solution for $\eqref{problem:RobustMPC}$ at $\tau=0$ ensures that Algorithm \ref{alg:MPCplanning} with $\cP$ set to \eqref{problem:RobustMPC} is recursively feasible.
\end{proposition}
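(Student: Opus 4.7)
The plan is a standard shift-and-augment recursive feasibility argument, specialized to the tightened Frobenius-norm constraint \eqref{eq:worstVaRconstraint}. I proceed by induction on $\tau$, the base case $\tau=0$ being given by hypothesis. For the inductive step, let $(\textbf{u}_\tau^\star,\textbf{x}_\tau^\star,\textbf{z}_\tau^\star,\{\epsilon_{ijk}^{t|\tau,\star}\})$ be a feasible tuple for \eqref{problem:RobustMPC} at step $\tau$. Because the EV dynamics \eqref{eq:systemdynamic} are deterministic, actuating $u_{\tau|\tau}^\star$ drives the EV exactly to $x_{\tau+1|\tau}^\star$, which is the initial state of the next planning problem. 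I take the candidate at $\tau+1$ to be the natural tail: $u_{t|\tau+1}^{\mathrm{cand}}=u_{t|\tau}^\star$ and $x_{t|\tau+1}^{\mathrm{cand}}=x_{t|\tau}^\star$, with $z_{ijk}^{t|\tau+1,\mathrm{cand}}=z_{ijk}^{t|\tau,\star}$ and $\epsilon_{ijk}^{t|\tau+1,\mathrm{cand}}=\epsilon_{ijk}^{t|\tau,\star}$ for all surviving $(t,i,j,k)$. By linearity of \eqref{eq:systemdynamic} the tail trajectory is reproduced by the tail controls, so \eqref{CL:dynamics}, \eqref{eq:xuconstraints}, \eqref{constraint:RAtj}, and \eqref{constraint:RAandBinary} pass to $\tau+1$ for free.

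The heart of the argument is to verify \eqref{eq:worstVaRconstraint} at step $\tau+1$ for every $t\in\mathbb{Z}_{(\tau+2):T}$ and each $(i,j,k)$, after the GMM prediction is updated from $(\mu^{t|\tau},\Sigma^{t|\tau})$ to $(\mu^{t|\tau+1},\Sigma^{t|\tau+1})$. Adding and subtracting the $\tau$-quantities, the left-hand side at $\tau+1$ decomposes as
\[
\underbrace{\Gamma^{t}\sqrt{\norm{\Sigma^{t|\tau}}_F}\,\norm{\Tilde{x}_{t|\tau}^\star}_2+\Tilde{x}_{t|\tau}^{\star\intercal}\mu^{t|\tau}}_{\leq\, M z^{t|\tau,\star}}+\Gamma^{t}\bigl(\sqrt{\norm{\Sigma^{t|\tau+1}}_F}-\sqrt{\norm{\Sigma^{t|\tau}}_F}\bigr)\norm{\Tilde{x}_{t|\tau}^\star}_2+\Tilde{x}_{t|\tau}^{\star\intercal}\bigl(\mu^{t|\tau+1}-\mu^{t|\tau}\bigr).
\]
By \eqref{eq:covShrink} the second summand equals $-\Gamma^{t} g^t(\tau)\norm{\Tilde{x}_{t|\tau}^\star}_2$, and by Cauchy--Schwarz together with \eqref{eq:meanShift} the third summand is at most $h^t(\tau)\norm{\Tilde{x}_{t|\tau}^\star}_2$. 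Assumption~\ref{assumption:VaRshrink} then forces their sum to be non-positive, yielding \eqref{eq:worstVaRconstraint} at $\tau+1$ with right-hand side $Mz^{t|\tau+1,\mathrm{cand}}=Mz^{t|\tau,\star}$, and closing the induction.

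The main obstacle is the bookkeeping around Assumption~\ref{assumption:reducingModeNum}, which permits the predictor to drop modes between $\tau$ and $\tau+1$; the verbatim ``copy the $k$-index'' step only makes sense once the surviving modes at $\tau+1$ carry a consistent labelling inherited from those at $\tau$. I plan to handle this by identifying each $k'\in\{1,\ldots,K^{\tau+1}\}$ with a parent $k(k')\in\{1,\ldots,K^{\tau}\}$ supplied by the forecasting pipeline, reusing $z_{ijk(k')}^{t|\tau,\star}$ together with a rescaled risk allocation $\epsilon_{ijk'}^{t|\tau+1}$ that still satisfies \eqref{constraint:RAtj}; the probability mass released by disappearing modes only loosens the per-mode $\Gamma^{t}$ and hence the tightened constraint. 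Aside from this labelling step, the whole proof reduces to the one-line inequality above, with Assumptions~\ref{assumption:guassianConstraint}--\ref{assumption:VaRshrink} consumed exactly where indicated.
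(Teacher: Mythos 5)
Your proposal is correct and follows essentially the same route as the paper's proof: reuse the tail of the step-$\tau$ solution as the candidate at $\tau+1$, and verify the tightened constraint \eqref{eq:worstVaRconstraint} by exactly the paper's inequality chain (add and subtract the $\tau$-quantities, bound the mean term via Cauchy--Schwarz and \eqref{eq:meanShift}, use \eqref{eq:covShrink}, and invoke Assumption~\ref{assumption:VaRshrink} to kill the residual $\bigl(h^t(\tau)-\Gamma^t g^t(\tau)\bigr)\norm{\Tilde{x}_{t|\tau}}_2$ term). Your extra bookkeeping for dropped modes is a detail the paper leaves implicit, handling it instead by noting that the uniform risk allocation \eqref{constraint:RAtj} keeps $\Gamma^t$ unchanged across planning steps.
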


\begin{proof}
Suppose at some MPC planning step $\tau$, there exists a feasible solution of \eqref{problem:RobustMPC}, where the state trajectory is $\textbf{x}_\tau = (x_{\tau+1|\tau}, \dots, x_{T|\tau})$. At the next MPC planning step $\tau+1$, for any $t\in \mathbb{Z}_{(\tau+2):T},$

\begin{align}
\begin{split}\label{proof:unifiedRF}
    & \; \Gamma^{t}  \sqrt{\norm{\Sigma^{t|\tau+1}}_F} \cdot \norm{\Tilde{x}_{t|\tau+1}}_2  + \Tilde{x}_{t|\tau+1}^{\intercal}{\mu^{t|\tau+1}} \\
     \leq & \; \Gamma^{t}  \sqrt{\norm{\Sigma^{t|\tau}}_F} \cdot \norm{\Tilde{x}_{t|\tau+1}}_2 + \Tilde{x}_{t|\tau+1}^{\intercal}{\mu^{t|\tau}} \\ 
     & \;\;\;\; + \biggl( h^t(\tau) - \Gamma^{t} \cdot g^t(\tau)\biggr) \cdot \norm{\Tilde{x}_{t|\tau+1}}_2 \\
      \leq & \; \Gamma^{t}  \sqrt{\norm{\Sigma^{t|\tau}}_F} \cdot \norm{\Tilde{x}_{t|\tau+1}}_2 + \Tilde{x}_{t|\tau+1}^{\intercal}{\mu^{t|\tau}}.
\end{split}
\end{align}
The first inequality is due to \eqref{eq:covShrink} and because from Cauchy-Schwarz inequality,
\begin{align*}
    \Tilde{x}_{t|\tau+1}^{\intercal} \left( {\mu^{t|\tau+1}} -{\mu^{t|\tau}}\right) &\leq \norm{\Tilde{x}_{t|\tau+1}}_2 \cdot \norm{\mu^{t|\tau+1} -{\mu^{t|\tau}}}_2
\end{align*} 
and therefore, by \eqref{eq:meanShift}, 
\begin{align*}
    \Tilde{x}_{t|\tau+1}^{\intercal} {\mu^{t|\tau+1}} \leq  \Tilde{x}_{t|\tau+1}^{\intercal} {\mu^{t|\tau}} +\norm{\Tilde{x}_{t|\tau+1}}_2 h^t(\tau).
\end{align*} 
The second inequality in \eqref{proof:unifiedRF} follows from \eqref{eq:covBoundmean}. If we set a state trajectory at $\tau+1$ to be $(x_{\tau+2|\tau+1}, \ldots, x_{T|\tau+1}) = (x_{\tau+2|\tau}, \ldots, x_{T|\tau})$, then it satisfies 
$$\Gamma^{t}  \sqrt{\norm{\Sigma^{t|\tau+1}}_F} \cdot \norm{\Tilde{x}_{t|\tau+1}}_2 + \Tilde{x}_{t|\tau+1}^{\intercal}{\mu^{t|\tau+1}} \leq M z^{t|\tau}$$
by \eqref{proof:unifiedRF}. Note that $\Gamma^t$ remains the same throughout the proof. This is because $\Gamma^t$ depends only on $\epsilon^t$, which is the same for each mode $k$ and does not vary at different planning steps due to the uniform risk allocation \eqref{constraint:RAtj}. Therefore, Algorithm \ref{alg:MPCplanning} with $\cP$ set to \eqref{problem:RobustMPC} is recursively feasible.
\end{proof}


\revised{
The high-level planning state machine in Fig.~\ref{fig:CLstatemachine} comprises two states and two transitions between the states. We have now discussed the recursive feasibility conditions for MPC in non-interaction (receding-horizon) and interaction (shrinking-horizon) states. To ensure long-term maneuverability with frequent state transitions, it's essential to examine feasibility conditions for the transitions from one state to another.
\begin{remark} \textbf{[State Transition Feasibility]}
    When transitioning from interaction to non-interaction state, the existence of a control-invariant terminal set (as discussed in Remark~\ref{remark:RHfeasibility}) ensures the feasibility. To ensure the transition from receding to shrinking horizon planning remains feasible, we need to ensure that a feasible solution can be found for \eqref{problem:RobustMPC} at $\tau=0$ when the EV has a potential interaction with the OVs. Otherwise, the EV should not transition to the interaction state. For instance, if the EV attempts a lane change but encounters multiple OVs in the neighboring lane, preventing the completion of the maneuver, the EV must continue driving in its original lane.
\end{remark}
}

Our next result shows that the planned trajectory satisfies the original chance constraint \eqref{eq:MPCchance} over the closed-loop horizon $T$.
\begin{theorem}\textbf{[Safety guarantee of MPC]} \label{theorem:MPCfeasibility}
\begin{enumerate}[leftmargin=14pt, label=\arabic*)]
    \item Under Assumption~\ref{assumption:guassianConstraint}, if Algorithm \ref{alg:MPCplanning} with $\cP$ set to \eqref{problem:CLCCTPproblem} does not return Infeasible and generates a closed-loop state trajectory $\textnormal{\textbf{x}}$, then $\textnormal{\textbf{x}}$ satisfies the original chance constraint \eqref{eq:MPCchance}.
    \item Under Assumptions~\ref{assumption:guassianConstraint}--\ref{assumption:VaRshrink}, Algorithm \ref{alg:MPCplanning} with $\cP$ set to \eqref{problem:RobustMPC} never returns Infeasible and generates a closed-loop trajectory $\textnormal{\textbf{x}}$ if \eqref{problem:RobustMPC} is feasible with the initial state $x_0$. Moreover, $\textnormal{\textbf{x}}$ satisfies the original chance constraint \eqref{eq:MPCchance}.
\end{enumerate} 
\end{theorem}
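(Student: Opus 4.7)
The plan is to reduce the safety guarantee for both planners to a single-step, per-OV chance-constraint satisfaction at the instant just before actuation, and then assemble the joint bound via Boole's inequality along the closed-loop horizon.

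\textbf{Part 1.} Since Algorithm~\ref{alg:MPCplanning} with $\cP$ set to \eqref{problem:CLCCTPproblem} does not return Infeasible, a feasible solution of \eqref{problem:CLCCTPproblem} exists at every planning step $\tau \in \mathbb{Z}_{0:T-1}$. In particular, instantiating \eqref{CL:constraint} at $t = \tau+1$ and combining with \eqref{constraint:RAtj}--\eqref{constraint:RAandBinary} is exactly the deterministic reformulation \eqref{eq:chanceGMM} of the single chance constraint $\mathbb{P}(x_{\tau+1\mid\tau} \notin \mathcal{X}_{\tau+1,j}^{safe}) \leq \epsilon/(TJ)$ for every $j \in \mathbb{Z}_{1:J}$, evaluated against the GMM $\delta_{ij}^{\tau+1} \sim \sum_k \pi_k \mathcal{N}(\mu_{ijk}^{\tau+1\mid\tau}, \Sigma_{ijk}^{\tau+1\mid\tau})$ of Assumption~\ref{assumption:guassianConstraint}. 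Reading this off with $\tau = t-1$ and identifying $x_{(t\mid t-1)}$ with the realized closed-loop state at time $t$, I obtain the conditional bound
\[
\mathbb{P}\bigl(x_{(t\mid t-1)} \notin \mathcal{X}_{t,j}^{safe} \,\big|\, \mathcal{F}_{t-1}\bigr) \leq \frac{\epsilon}{TJ},
\]
where $\mathcal{F}_{t-1}$ denotes the observation history available to the prediction module at step $t-1$. Taking expectations via the tower property removes the conditioning, and Boole's inequality over the $T \cdot J$ complementary events $\{x_{(t\mid t-1)} \notin \mathcal{X}_{t,j}^{safe}\}$ produces the joint bound of total risk $\epsilon$, which is precisely \eqref{eq:MPCchance}.

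\textbf{Part 2.} Proposition~\ref{theorem:recursivefeasibility} directly gives that Algorithm~\ref{alg:MPCplanning} with $\cP$ set to \eqref{problem:RobustMPC} is recursively feasible whenever \eqref{problem:RobustMPC} is feasible at $x_0$, so the algorithm never returns Infeasible. Moreover, the Cauchy--Schwarz step \eqref{eq:Cauchy} already spelled out in the paper shows that the robust constraint \eqref{eq:worstVaRconstraint} implies the nominal constraint \eqref{CL:constraint} pointwise in $\Tilde{x}_{t\mid\tau}$, so every feasible solution of \eqref{problem:RobustMPC} is also feasible for \eqref{problem:CLCCTPproblem}. Consequently, the single-step chance-constraint satisfaction argument of Part 1 transfers verbatim to the closed-loop trajectory produced by the robust planner, and another application of the tower property plus Boole's inequality yields \eqref{eq:MPCchance}.

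\textbf{Main obstacle.} The delicate step is reconciling the conditional chance constraint imposed at each planning step $\tau = t-1$ (against the prediction conditioned on $\mathcal{F}_{t-1}$) with the unconditional joint chance constraint \eqref{eq:MPCchance} required by the theorem statement. This forces one to read Assumption~\ref{assumption:guassianConstraint} as describing the conditional law of $\delta^{t}$ given $\mathcal{F}_{t-1}$, rather than an unconditional marginal; only then does the tower property cleanly convert each per-step bound into an unconditional one before Boole's inequality aggregates the $T\cdot J$ events. Once this probabilistic interpretation is fixed, the remainder of the proof is mechanical and follows from the reformulation \eqref{eq:chanceGMM}, the Cauchy--Schwarz inequality \eqref{eq:Cauchy}, and Proposition~\ref{theorem:recursivefeasibility}.
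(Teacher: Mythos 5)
Your proposal is correct and follows essentially the same route as the paper's proof: in both, feasibility of $\cP$ at every planning step means each executed state $x_{t|t-1}$ satisfies the deterministic constraint \eqref{CL:constraint} (for the robust planner, via Proposition~\ref{theorem:recursivefeasibility} and the Cauchy--Schwarz bound \eqref{eq:Cauchy} showing \eqref{eq:worstVaRconstraint} implies \eqref{CL:constraint}), and the conclusion then follows from the conservative reformulation $\eqref{eq:chanceGMM} \Rightarrow \eqref{eq:MPCchance}$. The only difference is that you unpack that implication explicitly, via Boole's inequality and a tower-property argument that reads Assumption~\ref{assumption:guassianConstraint} as the conditional law given the observation history, a probabilistic bookkeeping point the paper leaves implicit by citing the implication directly.
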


\begin{proof}
1) If Algorithm \ref{alg:MPCplanning} with $\cP$ set to \eqref{problem:CLCCTPproblem} generates a closed-loop state trajectory $\textbf{x}=(x_{(1|0)}, \ldots, x_{(T|T-1)})$, then $x_{t|t-1}$ satisfies \eqref{CL:constraint} for all $t \in \mathbb{Z}_{1:T}$. Considering every face across all modes of all OVs: $j \in \mathbb{Z}_{1:J}, \; k \in \mathbb{Z}_{1:K_j}, \; i \in \mathbb{Z}_{1:I_j}$, the closed-loop trajectory $\textbf{x}$ also satisfies the conjunction of deterministic constraints \eqref{eq:OLconjunctionCC}. Given the implication of $\eqref{eq:chanceGMM}\Rightarrow\eqref{eq:MPCchance}$, we conclude that $\textbf{x}$ also satisfies \eqref{eq:MPCchance}.

\noindent
2) By Lemma~\ref{theorem:recursivefeasibility}, Algorithm~\ref{alg:MPCplanning} with $\cP$ set to \eqref{problem:RobustMPC} is recursively feasible and generates a closed-loop state trajectory $\textbf{x}=(x_{(1|0)}, \ldots, x_{(T|T-1)})$ that satisfies
$$\Gamma^{t}  \sqrt{\norm{\Sigma^{t|t-1}}_F} \cdot \norm{\Tilde{x}_{t|t-1}}_2 + \Tilde{x}_{t|t-1}^{\intercal} \cdot {\mu^{t|t-1}} \leq M z^{t|t-1}$$
 for all $t\in \mathbb{Z}_{1:T}$.  It follows from \eqref{eq:Cauchy} that $\textbf{x}$ also satisfies 
\begin{equation*}
 \Gamma^{t} \sqrt{ \Tilde{x}_{t|t-1}^{\intercal}{\Sigma^{t|t-1}} \Tilde{x}_{t|t-1}} + \Tilde{x}_{t|t-1}^{\intercal} \cdot \mu^{t|t-1} \leq Mz^{t|t-1}
\end{equation*}
for all $t\in \mathbb{Z}_{1:T}$. Considering every face across all modes of all OVs: $j \in \mathbb{Z}_{1:J}, \; k \in \mathbb{Z}_{1:K_j}, \; i \in \mathbb{Z}_{1:I_j}$, the closed-loop trajectory $\textbf{x}$ also satisfies \eqref{eq:OLconjunctionCC} and thus satisfies \eqref{eq:MPCchance}.
\end{proof}

\begin{remark} \textbf{[Probabilistic safety guarantee under estimated GMM moments]}
Theorem \ref{theorem:MPCfeasibility} provides a deterministic guarantee of the original chance constraint \eqref{eq:MPCchance}, assuming that the GMM moments are known. In many real-world applications, we only have samples of uncertainties rather than the exact knowledge of the moments \cite{Yoon2020}. In cases where the exact GMM moments are unknown and only $N_s$ samples of uncertainty are available, we can empirically cluster and estimate the GMM moments $(\hat{\mu}^{t}, \hat{\Sigma}^{t})$.

When the GMM moments are estimated from samples, we can robustify \eqref{CL:constraint} against moment estimation error \cite{Calafiore2006CC} using the following constraint. 
\begin{equation} \label{eq:MRA} \fontsize{9}{15}\selectfont
    \Gamma^{t}\sqrt{\left(1+r_{2}^t(N_s, \beta)\right)(\Tilde{x}_t^{\intercal}\hat{\Sigma}^{t}\Tilde{x}_t)} + r_{1}^t(N_s, \beta)\norm{\Tilde{x}_{t|\tau}}_2 + \Tilde{x}_t^{\intercal}\hat{\mu}^{t} \leq Mz^{t}.
\end{equation}

Here, $\beta$ is a user-defined probabilistic tolerance, and $r_1^t$, $r_2^t$ are the mean and covariance concentration bounds at time $t$, which take into account the difference between true and estimated moments. The robustification will provide a probabilistic safety guarantee. In particular, under Assumption~\ref{assumption:guassianConstraint}, if Algorithm \ref{alg:MPCplanning} with $\cP$ set to \eqref{problem:CLCCTPproblem} and \eqref{CL:constraint} being replaced by \eqref{eq:MRA} does not return infeasible. The closed-loop state trajectory $\textnormal{\textbf{x}}$ satisfies the original chance constraint \eqref{eq:MPCchance} with at least a $1-2TJ\beta$ probability \cite[Theorem 2]{Ren2023}.
\end{remark}


\subsection{Contingency Planning for Reducing Conservativeness}\label{section:contingencyPlanning}
A  challenge in trajectory planning under GMM is the high uncertainty of the predictions. If the EV must account for all modes of the OVs in a given planning horizon,
a feasible solution may not exist or may be too conservative. Here, we use a notion of contingency planning \cite{hardy2013},\revised{\cite{Alsterda2019contMPC, Chen2022brachMPC},} which accounts for predicted obstacle trajectory permutations, to reduce conservativeness under GMM uncertainty. In particular, we consider planning $L$ open-loop trajectories, each of which considers a subset of OV modes. We denote all mode indices of all OVs by $\cK := \bigcup_{j=1}^{J} \{1_j, 2_j, \ldots, K_j\}$. We select $L$ subsets of $\cK$: $(\cS_1, \ldots, \cS_L)$, where $ \bigcup_{l=1}^L \cS_l = \cK$. In this way, all OV modes will be considered at least once. For example, we have two OVs where OV$_1$ has two modes $\{1_1, 2_1\}$ (here the sub-index 1 represents mode belongs to OV$_1$), while OV$_2$ has one mode $\{1_2\}$. We can plan $L=2$ trajectories where trajectory $l=1$ considers modes $S_1 = \{1_1, 1_2\}$ and trajectory $l=2$ considers $S_2 = \{2_1 \}$.

For each trajectory $l$, we consider one subset of OV modes $\cS_l$, as shown in constraint \eqref{cst:contSafety}. All $L$ trajectories have the same control input and state for the first $T_c$ timesteps. The deterministic optimization problem solved at each MPC step $\tau$ is
\begin{subequations} \label{problem:contingencyMPC} 
    \begin{alignat} {2}
        \min_{\vu_\tau^{1:L}, \vz_\tau^{1:L}} \quad & \sum^L_{l=1} \cJ(\iter{\vu}{l}_\tau, \iter{\vx}{l}_\tau) \\
        \st\quad
        & \iter{x}{l}_{t + 1} = A_t \iter{x}{l}_t + B_t \iter{u}{l}_t, ~\forall t\in\bbZ_{\tau:T_{OH}-1}\label{cst:conincideDynamics}\\
        & \iter{u}{1}_{t|\tau} = \cdots = \iter{u}{L}_{t|\tau}, ~~~~~~~~\forall t\in\bbZ_{\tau:\tau+T_c-1}\label{cst:conincideInput}\\
        & \Gamma_{k}^{t} \sqrt{ \Tilde{x}_{t|\tau}^{(l)\intercal}{\Sigma_{k}^{t|\tau}} \Tilde{x}^{(l)}_{t|\tau}} + \Tilde{x}_{t|\tau}^{(l)\intercal}\mu_{k}^{t|\tau} \leq Mz_{k}^{t|\tau (l)}, \notag 
        \\ & \;\;\;\;\;\;\;\;\;\;\;\;\;\;\;\;\;\;\;\;\;\;\;\;\;\;\;\;\;\; \forall t, k \in \cS_{l}, l\in\bbZ_{1:L}, \label{cst:contSafety} \\
        & \eqref{eq:xuconstraints}, \eqref{constraint:RAtj}, \eqref{constraint:RAandBinary}.
    \end{alignat}
\end{subequations}
Note that the constraints \eqref{cst:conincideInput} need to be satisfied only for the first $T_c$ timesteps. In this work, we use $T_c = 1$. 

Solving \eqref{problem:contingencyMPC} generates $L$ trajectories of input $\vu_\tau^{1:L}$ and states $\vx_\tau^{1:L}$. The control input $u_{\tau|\tau}$ and state $x_{\tau+1|\tau}$ that coincides for $T_c = 1$ step. The control inputs for the remaining timesteps $t=\tau+1, \ldots, T_{OH}-1$ may diverge for the $L$ trajectories and will not be used for actuating the EV. The contingency planning scheme plans $L$ trajectories simultaneously, which sacrifices computational efficiency for less conservative decisions.

It is challenging to develop sufficient conditions for recursive feasibility of the contingency planning due to the coinciding constraint \eqref{cst:conincideInput}. Specifically, recursive feasibility requires that a feasible solution at $\tau = 0$ guarantees solutions for all subsequent planning steps $\tau = 1, \ldots, T-1$. However, the $L$ control input trajectories $u_{t|0}^{(l)}$ planned at $\tau = 0$ only coincide for $t=0$ for all $l \in \mathbb{Z}_{1:L}$ and they are permitted to diverge from $t = 1$ onwards. Hence, a feasible coinciding input that satisfies \eqref{cst:conincideInput} does not necessarily exist for the next planning step $\tau=1$. 
However, we can still provide the following safety guarantee if the contingency MPC, i.e., Algorithm \ref{alg:MPCplanning} with $\cP$ set to  \eqref{problem:contingencyMPC}, is recursively feasible and generates a closed-loop trajectory.
\begin{proposition} \textbf{[Contingency planning safety]}
Under Assumption~\ref{assumption:guassianConstraint}, if Algorithm \ref{alg:MPCplanning} with $\cP$ set to \eqref{problem:contingencyMPC} does not return Infeasible and generates a closed-loop state trajectory $\textnormal{\textbf{x}}$, then $\textnormal{\textbf{x}}$ satisfies the original chance constraint \eqref{eq:MPCchance}.
\end{proposition}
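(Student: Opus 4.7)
The plan is to mirror the safety argument of Theorem \ref{theorem:MPCfeasibility}(1) for the nominal planner, with one extra step that exploits the two structural features that make contingency planning sound: (i) the $L$ planned trajectories coincide at the step that actually gets executed, and (ii) the mode subsets $\cS_1,\ldots,\cS_L$ cover the full mode set $\cK$. Concretely, I want to show that each executed state $x_{\tau+1|\tau}$ satisfies the deterministic reformulation \eqref{eq:OLconjunctionCC} for \emph{every} mode of every OV, and then invoke the implication $\eqref{eq:chanceGMM}\Rightarrow\eqref{eq:MPCchance}$ already established in the paper.

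First, I would fix an arbitrary MPC step $\tau$ at which the contingency problem \eqref{problem:contingencyMPC} is feasible, and observe that the initial condition $x_{\tau|\tau}$ is the same for all $l$ (it is the physically measured EV state). Because $T_c=1$, constraint \eqref{cst:conincideInput} forces $u_{\tau|\tau}^{(1)}=\cdots=u_{\tau|\tau}^{(L)}=:u_{\tau|\tau}$, and then the deterministic dynamics \eqref{cst:conincideDynamics} yield $x_{\tau+1|\tau}^{(1)}=\cdots=x_{\tau+1|\tau}^{(L)}=:x_{\tau+1|\tau}$. Hence the state actually executed at the next time step is unambiguous and, in particular, equals the one appearing in \eqref{cst:contSafety} for every trajectory index $l$.

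Next I would use the covering property $\bigcup_{l=1}^L \cS_l = \cK$. Fix any OV $j$, any mode $k\in\{1_j,\ldots,K_j\}$ and any face $i\in\bbZ_{1:I_j}$. By the covering property there exists $l^\star\in\bbZ_{1:L}$ with $k\in\cS_{l^\star}$, so the safety constraint \eqref{cst:contSafety} is enforced at $t=\tau+1$ for mode $k$ along trajectory $l^\star$. Since $x_{\tau+1|\tau}^{(l^\star)}=x_{\tau+1|\tau}$ by the previous step, the executed state satisfies the deterministic reformulation \eqref{eq:OLconjunctionCC} for this $(j,k,i)$. As $(j,k,i)$ was arbitrary, $x_{\tau+1|\tau}$ satisfies \eqref{eq:OLconjunctionCC} for \emph{all} $(j,k,i)$, together with the binary/risk-allocation constraints \eqref{constraint:RAtj}--\eqref{constraint:RAandBinary} inherited directly from \eqref{problem:contingencyMPC}.

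Finally, collecting the executed states across all planning steps yields the closed-loop trajectory $\textnormal{\textbf{x}}=(x_{1|0},x_{2|1},\ldots,x_{T|T-1})$, every element of which satisfies the full conjunction \eqref{eq:chanceGMM}. The paper has already shown $\eqref{eq:chanceGMM}\Rightarrow\eqref{eq:MPCchance}$ (via Big-M, Boole's inequality, and the Gaussian quantile reformulation of each mode), so $\textnormal{\textbf{x}}$ satisfies the original joint chance constraint \eqref{eq:MPCchance}. The only step that requires care is the first one: making explicit that with $T_c=1$ the coincidence constraint, while \emph{only} binding at $t=\tau$, still propagates through the one-step deterministic dynamics to enforce a common $x_{\tau+1|\tau}$, which is precisely what allows the covering property to transfer from the $L$ planned trajectories to the single executed state.
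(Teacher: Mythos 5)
Your proposal is correct and follows essentially the same route as the paper's proof: the paper likewise argues that each executed state is the coinciding state of all $L$ trajectories and therefore satisfies the tightened constraint \eqref{cst:contSafety} for every mode in $\bigcup_{l=1}^L \cS_l = \cK$, after which safety follows exactly as in Theorem~\ref{theorem:MPCfeasibility}(1). You merely spell out the intermediate details (common measured state plus $T_c=1$ input coincidence propagating through the deterministic dynamics, and the covering argument per face and mode) that the paper leaves implicit.
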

\begin{proof}
The proof is a simple consequence that every executed EV state $x_{\tau|\tau-1}$ is a coinciding state for all $L$ trajectories, which satisfies
$$\Gamma_{k}^{\tau} \sqrt{ \Tilde{x}_{\tau|\tau-1}^{\intercal}{\Sigma_{k}^{\tau|\tau-1}} \Tilde{x}_{\tau|\tau-1}} + \Tilde{x}_{\tau|\tau-1}^{\intercal} \cdot \mu_{k}^{\tau|\tau-1} \leq Mz_{k}^{\tau|\tau-1}$$
for all $\tau\in \mathbb{Z}_{1:T}$ and $\forall k \in \bigcup_{l=1}^L \cS_l = \cK$. 
\end{proof}

Each of the above three methods comes with certain advantages and disadvantages. 
 The robust planner guarantees the recursive feasibility under Assumption \ref{assumption:VaRshrink} while it is more conservative than the nominal planner.
 The contingency planner generates the least conservative trajectories at the cost of the longest computational time. To evaluate the performances of these frameworks in practice, we apply these three methods to address autonomous driving scenarios in the next section.

\begin{figure}[!t]
    \centering
    \includegraphics[width=0.5\textwidth]{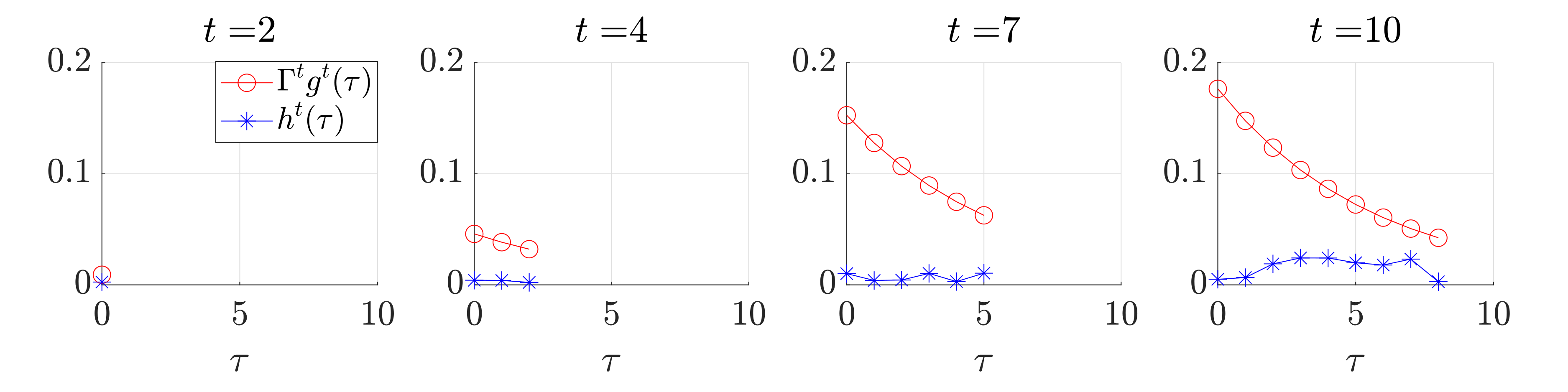}
    \caption{Verification 
    of Assumption \ref{assumption:VaRshrink} for one face of the OV in the lane-changing example. The mean shift $ h^t(\tau)$ in blue is upper bounded by the covariance shrinkage $\Gamma^t g^t(\tau)$ in red. }
    \label{fig:Assump3Curves}
\end{figure}

\begin{figure*}[!ht] 
    \centering
    \subfloat[\revised{Closed-loop trajectories \secrevised{of the EV (blue: the nominal planner, magenta: the robust planner) and the ground-truth poses of the OV (black polytopes).}}]{{\includegraphics[width=\textwidth]{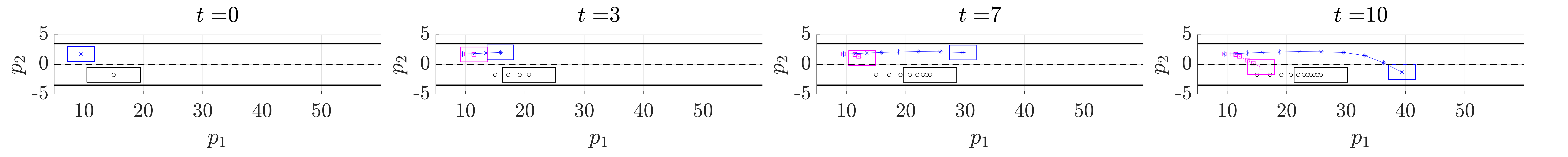} }}\label{fig:yieldclosed}
    \hfill
    \subfloat[\revised{Open-loop trajectories \secrevised{of the EV and the predicted poses of the OV (red polytopes)} generated at planning steps $\tau = 0, 3, 7, 9$.}]{{\includegraphics[width=\textwidth]{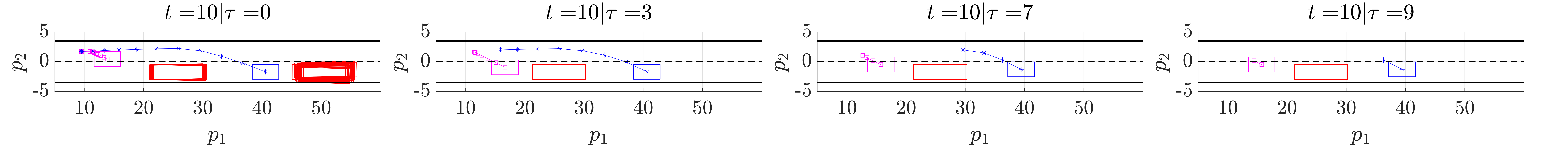} }}\label{fig:yieldopen}
    \caption{OV yield scenario in the lane-changing example. \secrevised{The EV aims at changing to the neighboring lane, while the OV on the neighboring lane slows down and the EV can overtake the OV.
    }}
    \label{fig:yield}
\end{figure*}

\begin{figure*}[!ht] 
    \centering
    \subfloat[\revised{Closed-loop trajectories \secrevised{of the EV and the ground-truth poses of the OV.}}]{{\includegraphics[width=\textwidth]{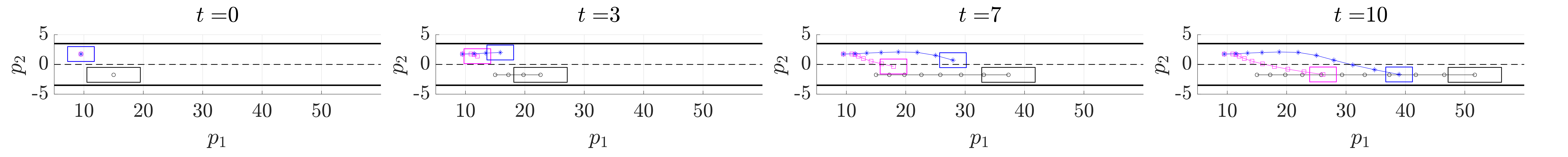} }}\label{fig:notyieldclosed}
    \hfill
    \subfloat[\revised{Open-loop trajectories \secrevised{of the EV and the predicted poses of the OV} generated at planning steps $\tau = 0, 3, 7, 9$.}]{{\includegraphics[width=\textwidth]{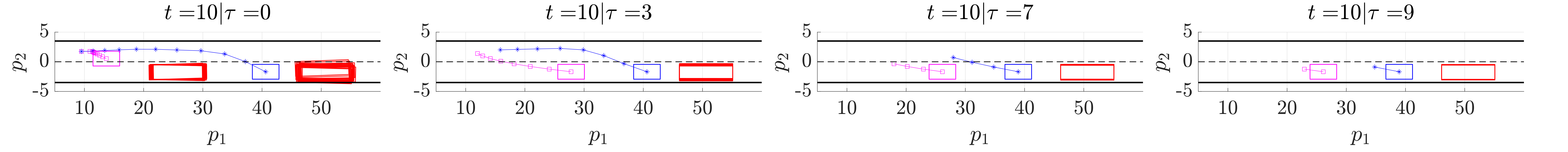} }}\label{fig:notyieldopen}
    \caption{OV accelerate scenario in the lane-changing example.  \secrevised{The EV aims at changing to the neighboring lane, while the OV on the neighboring lane accelerates and the EV cannot overtake the OV. We use the same color coding as in Fig.~\ref{fig:yield}.}}
    \label{fig:notyield}
\end{figure*}

\section{Case study} \label{sec:experiment}
In this section, we demonstrate our theoretical results in two autonomous driving scenarios: lane changing and intersection crossing. The simulations were performed on a personal computer with an AMD Ryzen 5 5600G CPU with six cores at 3.9GHz. As for optimization solvers, we use GUROBI 10.0.1 \cite{gurobi} for lane-changing scenarios in Section~\ref{sec:lane-change} and CPLEX 12.10 \cite{cplex2009v12} for intersection scenarios in Section~\ref{sec:CARLA}.

\subsection{Lane-changing Scenarios in MATLAB}\label{sec:lane-change}

We adopt lane-changing scenarios in \cite{Ahn_2022} where an OV on the targeting lane either slows down (mode 1) to yield or accelerates (mode 2) not to yield. When an EV decides to change a lane, it starts interacting with the OV and begins shrinking-horizon planning as illustrated in Fig.~\ref{fig:CLstatemachine}. 

Let the EV state be ${x} = (p_1, p_2, v_1, v_2)^\intercal$, where $(p_1,p_2)$ is the position of the centroid of the EV and $(v_1, v_2)$ is the velocities along the longitudinal and lateral axis of the lane boundaries, respectively. The EV is modeled as a double integrator:
\begin{equation}\label{eq:doubleInte}
    \dot{{x}} = (\dot{p}_1, \dot{p}_2, \dot{v}_1, \dot{v}_2)^\intercal = (v_1, v_2, u_1, u_2)^ \intercal.
    \end{equation}
Here, the input ${u} = (u_1, u_2)$ is the acceleration along each axis. The acceleration and velocity are bounded as $(u_1, u_2) \in [-10,3] \times [-5,5]$ m/s$^2$ and 
$(v_1, v_2) \in [0, 22.2] \times [-5.56, 5.56]$ m/s. The position $(p_1, p_2)$ is constrained within the boundaries of the road. To describe realistic vehicle behaviors, we impose the following coupling constraints of the acceleration and velocity:
\begin{align} \label{ep:coupled} 
    &\mathcal{U}:=\{u\in\mathbb{R}^2: | u_1 - t_i | + 1.3u_2 \leq 0,\, i=1,2\},
    \\
    &\mathcal{X}:=\{x \in \mathbb{R}^4 |\, | v_1 - c_i | + 2.0v_2 \leq 0,\, i=1,2\},
\end{align}
where $t_1 = -5.56, t_2 = 5.56, c_1 = 0$ and $c_2 = 22.2$. The initial velocity of the EV and the OV is set as 5.56 m/s.
We discretize the model \eqref{eq:doubleInte} using the zero-order hold method with sampling time $0.4$s.

At each planning step $\tau$, we minimize the cost function 
\begin{equation}\label{eq:LaneChangeMPCcost}
    \cJ(\vx_\tau) = \left(x_{T_{OH},2} - x_\text{goal} \right)^2 - 0.1\left(x_{T_{OH},1}\right),
\end{equation}
where $x_\text{goal}$ is the lateral center of the targeting lane. The first term encourages the lane-changing maneuver by minimizing the distance between the EV and the targeting lane. The second term motivates the forward progression of the EV by maximizing the longitudinal traveled distance. 
The overall planning horizon is set to be $T=10$. A shrinking-horizon planning scheme is employed, starting with $T_{OH} = 10$ at the initial planning step $ \tau=0$, followed by $T_{OH} = 9$ at the subsequent step $\tau=1$, and so on. The risk tolerance for the chance constraint is $\epsilon=0.05$.

In this lane-changing test scene, we focus on verifying Proposition \ref{theorem:recursivefeasibility} and evaluating the performance of the nominal and robust planners, that is, Algorithm~\ref{alg:MPCplanning} with $\cP$ set to \eqref{problem:CLCCTPproblem} and \eqref{problem:RobustMPC}, respectively. To validate Proposition \ref{theorem:recursivefeasibility}, we synthesize the prediction distribution such that it satisfies Assumption \ref{assumption:VaRshrink}.
In particular, we let the predicted acceleration of the OV follow a multi-modal Gaussian distribution, which has two modes at the initial planning step $\tau=0$. For the subsequent planning steps $\tau \geq 1$, the number of modes decreases to one. \secrevised{As shown in Fig~\ref{fig:yield}-(b) and \ref{fig:notyield}-(b), the red polytopes represent predictive samples of the OVs. At $\tau=0$, the OV is predicted to exhibit two modes of behavior (i.e., yielding or accelerating). As the planning progresses and the OV's movements are observed, these two modes converge to a single mode.}
\secrevised{To model the decreasing uncertainty in the OV's future states over time, we add Gaussian noise to the initial mean and multiply the initial covariance by a decaying constant of 0.5. This is illustrated by the red polytopes becoming more concentrated as the planning steps progress.}

Fig. \ref{fig:Assump3Curves} shows the mean and covariance shifts over the planning steps $\tau$. The red curves represent the covariance shrinkage $g^t(\tau)$ (and multiplied by $\Gamma^t$) between consecutive planning steps as defined in \eqref{eq:covShrink}, and the blue curves represent the mean shift between consecutive planning steps as defined in \eqref{eq:meanShift}. The fact that the covariance shrinkage multiplied by $\Gamma^t$ is always greater than the mean shift implies that Assumption \ref{assumption:VaRshrink} is satisfied.


The simulation results for OV yielding and accelerating scenarios are shown in Figs. \ref{fig:yield} and \ref{fig:notyield} respectively. We conducted simulations 10 times for each scenario and reported the average performance. The results are summarized in Table \ref{tab:performance-laneChang}. 

\textbf{Recursive feasibility:} In both cases, we observed that the robust planner is recursively feasible as long as it is feasible at $\tau=0$. This confirms Proposition \ref{theorem:recursivefeasibility}. 
We also observe that in the same scenarios, the nominal planner does not return Infeasible and successfully generates a closed-loop trajectory.

\textbf{Optimality:} We evaluate the cost function in \eqref{eq:LaneChangeMPCcost} with the final state of the closed-loop trajectory, which is $\cJ(x_{T})$. The nominal planner generates a smaller cost value than the robust planner. This can also be shown in Figs. \ref{fig:yield} and \ref{fig:notyield}, where the nominal planner (the blue curves) generates less conservative trajectories, i.e., the nominal planner's trajectory tries to cut in between the two predicted OV modes to change the lane and gain more forward progression, whereas the robust planner (the magenta curves) tries to slow down and follows behind all possible OV modes.

\textbf{Computational time:} We report the average worst optimization solver time across each planning step over 10 trials. Note that the two planners take a shorter computation time than the sampling time of $0.4$s, implying the real-time applicability of the proposed planners. 

\begin{table}[b]
    \caption{Performances in the lane-changing scenario} \vspace{-2mm}
    \begin{center} 
    \begin{tabular}{| l | l | r | r |}
        \hline
        \textbf{Method} & \textbf{Scene} & \textbf{Cost} & \textbf{Time (s)}  \\ \hline
        \multirow{2}{*}{\shortstack[l]{Nominal \\
        ($\cP = \eqref{problem:CLCCTPproblem}$)}}  
        & OV yield & -2.56 & 0.15 \\ \cline{2-4}
        & OV accelerate & -2.81  & 0.15   \\ \hline
        \multirow{2}{*}{\shortstack[l]{Robust \\ ($\cP = \eqref{problem:RobustMPC}$)}} 
        & OV yield & 0.46 & 0.05 \\ \cline{2-4}
        & OV accelerate & -2.40 & 0.05\\  \hline
    \end{tabular}
    \end{center}
    \label{tab:performance-laneChang} 
\end{table}

\subsection{Intersection scenarios in CARLA simulator}\label{sec:CARLA}

To test the performance of our methods in real-world autonomous driving scenarios, we utilize the state-of-the-art prediction model \textit{Trajectron++} \cite{Salzmann2020} and run simulations in a realistic autonomous driving CARLA simulator \cite{Dosovitskiy17}. Also, we use ASAM OpenDRIVE that describes the road's features, such as traffic rules in CARLA\footnote{ASAM OpenDRIVE: \url{https://www.asam.net/standards/detail/opendrive}}.

We consider two different scenes:
\begin{itemize}[leftmargin=14pt]
    \item \textbf{T-intersection (T):} The EV interacts with one OV at a T-intersection as shown in Fig. \ref{fig:Tintersection}.
    \item \textbf{Star-intersection (Star):} The EV interacts with two OVs at a star-shaped intersection as shown in Fig. \ref{fig:Starintersection}. 
\end{itemize}
In the simulation, we combine shrinking-horizon planning and receding-horizon planning, as shown in Fig. \ref{fig:CLstatemachine}. Before the EV enters the intersection, we neglect the OV-related chance constraints and employ receding-horizon MPC for lane keeping with an open-loop horizon $T_s=8$ (i.e., 4 seconds) for each MPC step. After the EV enters the intersection, we employ shrinking-horizon MPC with $T=8$, which is required to complete the turn at the intersection. We impose the chance constraints when the EV's planned trajectories are within the intersection. We neglect the chance constraints if the OVs have already exited the intersection.

To consider realistic vehicle motions during a turn, we use a kinematic bicycle model $\dot{x} = f(x, u)$, where
    \begin{equation}\label{eq:bicycle}
    \begin{aligned}
        f(x, u) = \begin{bmatrix}
            \begin{array}{l}
                v \cos(\psi + \gamma) \\
                v \sin(\psi + \gamma) \\
                \frac{v}{L} \cos(\gamma) \tan(u_2) \\
                u_1 \\
            \end{array}
        \end{bmatrix}
    \end{aligned}
\end{equation}

Here, $x = (p_x, p_y, \psi, v)$ and $u = (u_1, u_2)$. For the state, $(p_x, p_y)$ is the position in the global reference frame, $\psi$ is the angle between the vehicle's longitudinal axis and $x$-axis of the global reference frame, and $v \in [0, \revised{10}]$ m/$s$ is the velocity. For the control input, $u_1 \in [\revised{-4}, \revised{7}]$ m/$s^2$ is the acceleration, and $u_2 \in [\revised{-35^\circ},\revised{35^\circ}]$ is the steering angle. Also, $\gamma=\arctan\left( \frac{l_r}{L_{ov}} \tan(u_2) \right)$ is the angle between the vehicle longitudinal axis and velocity vector, $L_{ov}=1$ m is the length of the vehicle, and $l_r=0.5$ m is the length between the back wheel and the vehicle center of gravity. We convert this system to a discrete-time linear time-varying (LTV) system $x_{t + 1}-\bar{x}_{t+1} = A_t (x_t-\bar{x}_t) + B_t (u_t-\bar{u}_t)$ by linearizing \eqref{eq:bicycle} around a nominal trajectory $(\bar{\vx}, \bar{\vu})$ \cite{Falcone2007}. 

\begin{figure}[!t] 
    \centering
    \includegraphics[width=0.4\textwidth]{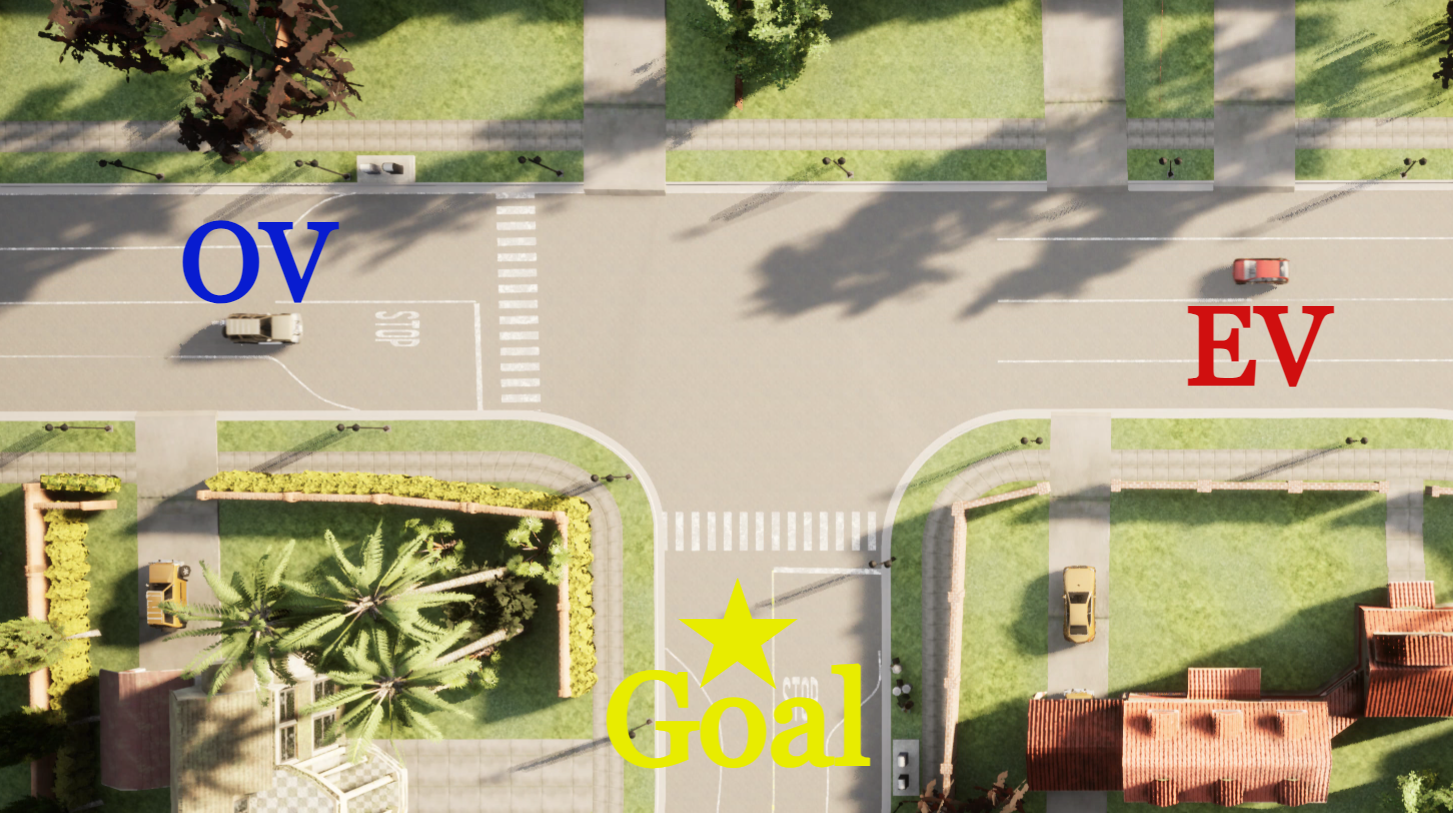}
    \caption{T-intersection scene in the CARLA simulator. The EV (red car) intends to make a left turn and reach the yellow star goal in the presence of one OV (white car).}
    \label{fig:Tintersection}
\end{figure}
\begin{figure}[!t] 
    \centering
    \includegraphics[width=0.4\textwidth]{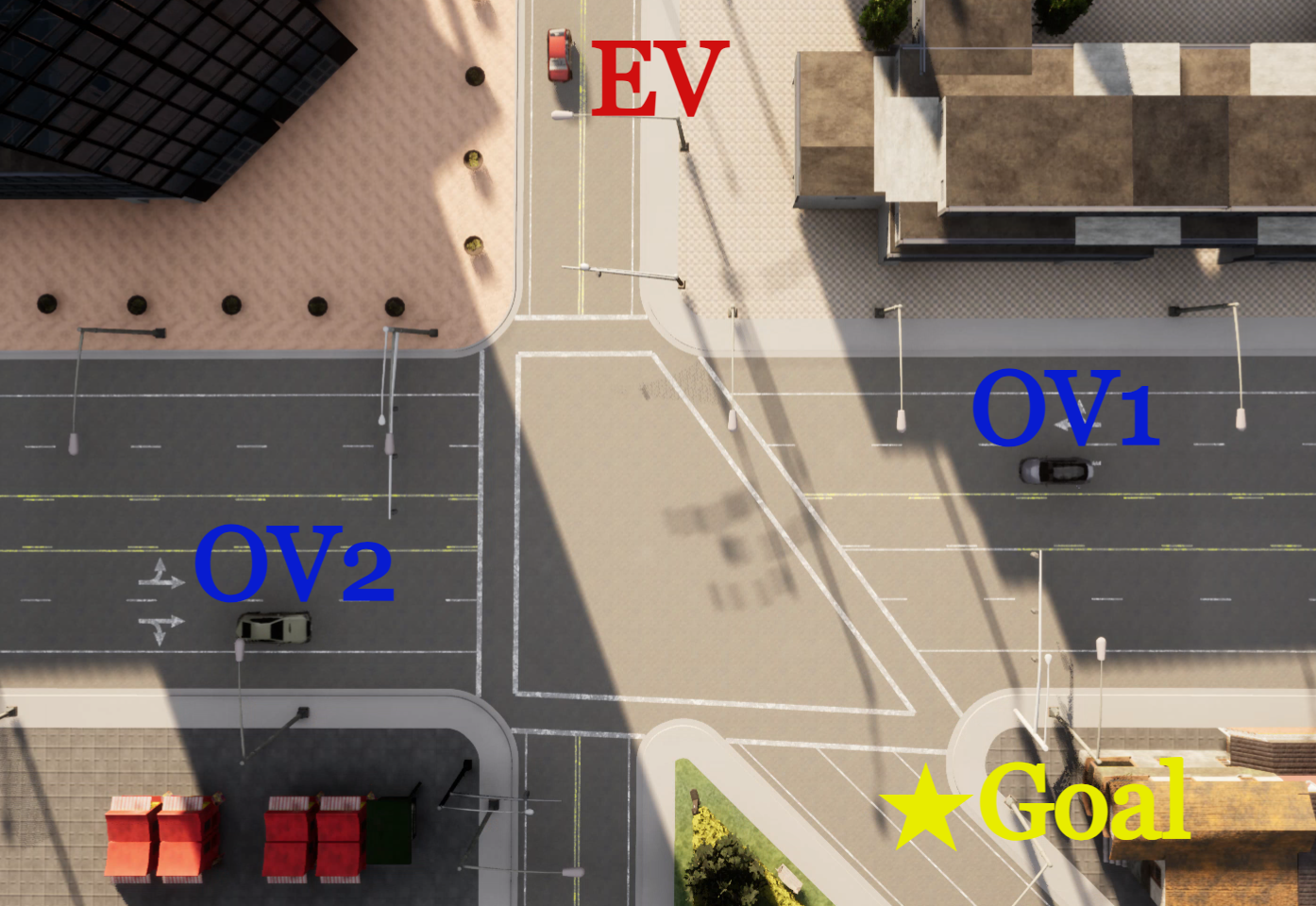}
    \caption{Star-intersection scene in the CARLA simulator. The EV (red car) intends to reach the yellow star goal. OV$_1$ (gray car on the right) turns to the left, and OV$_2$ (gray car on the left) turns to the right at the intersection.}
    \label{fig:Starintersection}
\end{figure}

\begin{figure*}[!ht] 
    \centering
    \captionsetup[subfloat]{labelformat=empty}
    \subfloat[t = 2]{{\includegraphics[width=0.185\textwidth]{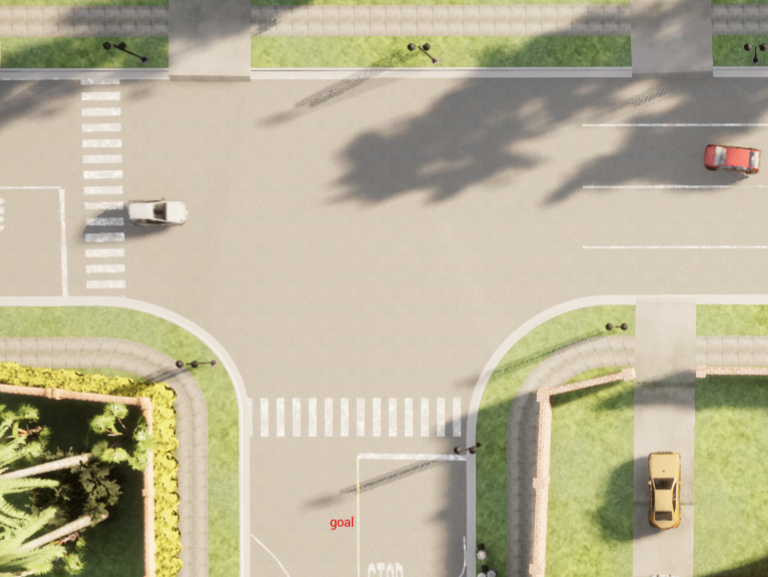} }}\label{fig:T_0} 
    \subfloat[t = 4]{{\includegraphics[width=0.185\textwidth]{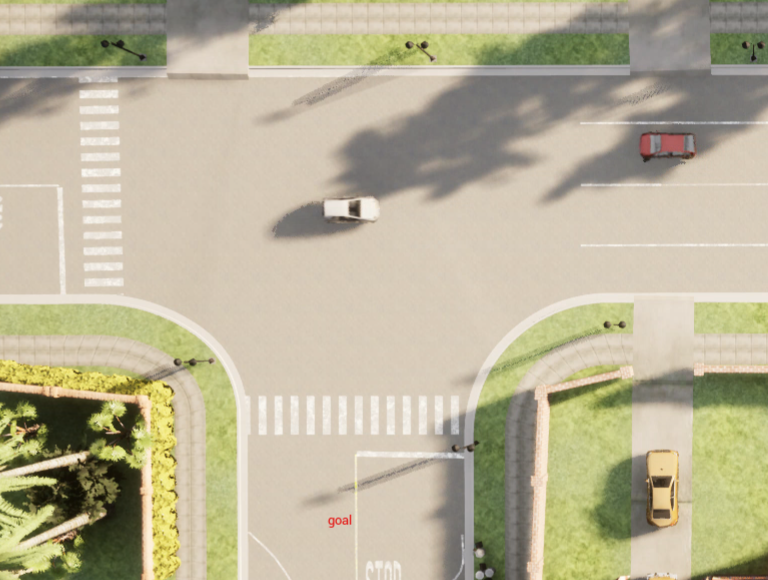} }}\label{fig:T_2}
    \subfloat[t = 6]{{\includegraphics[width=0.185\textwidth]{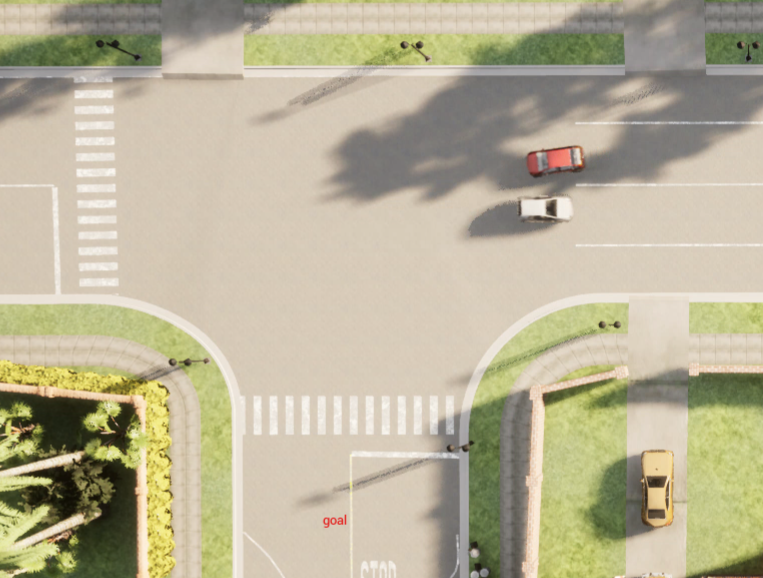} }}\label{fig:T_4}
    \subfloat[t = 8]{{\includegraphics[width=0.185\textwidth]{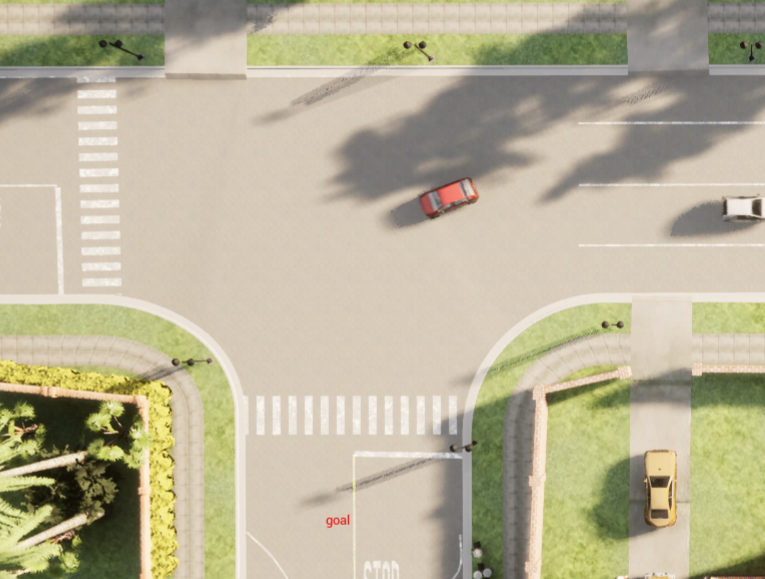} }}\label{fig:T_6}
    \subfloat[t = 10]{{\includegraphics[width=0.185\textwidth]{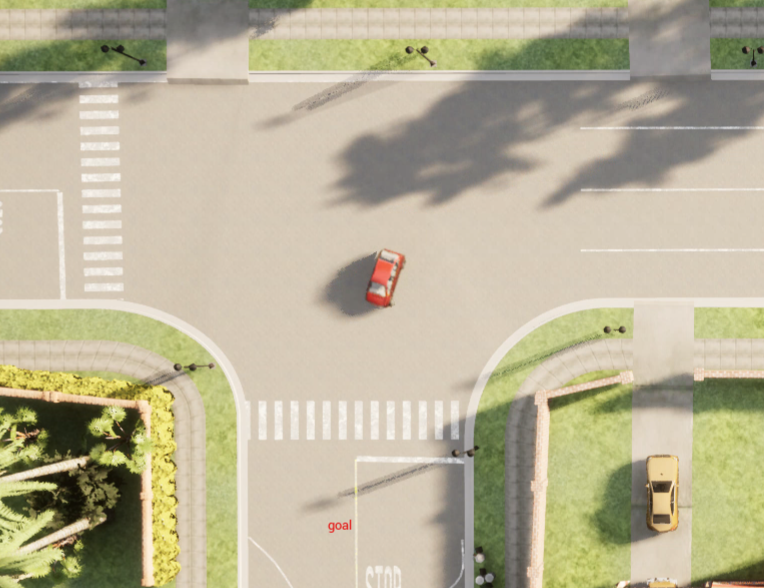} }}\label{fig:T_8}
    \caption{Simulation of the nominal planner in the T-intersection scene. The EV's trajectory (red car) is generated by the receding-horizon and shrinking-horizon implementation, and shrinking-horizon planning starts at $t=2$.
    } 
    \label{fig:nominal_T_snapshot}
\end{figure*}

\begin{figure*}[!ht] 
    \centering
    \captionsetup[subfloat]{labelformat=empty}
    \subfloat[t = 0]{{\includegraphics[width=0.185\textwidth]{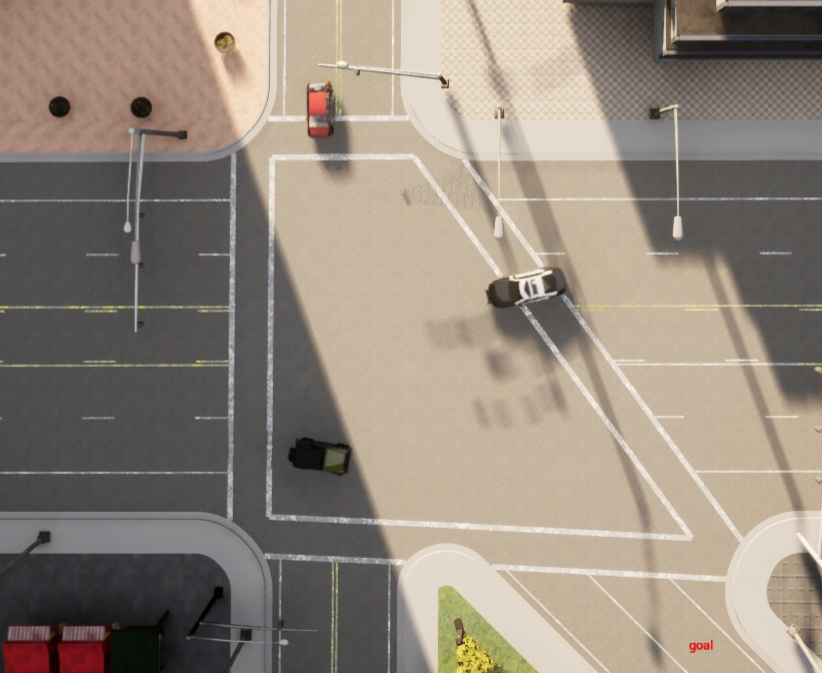} }}\label{fig:Star_0}
    \subfloat[t = 2]{{\includegraphics[width=0.185\textwidth]{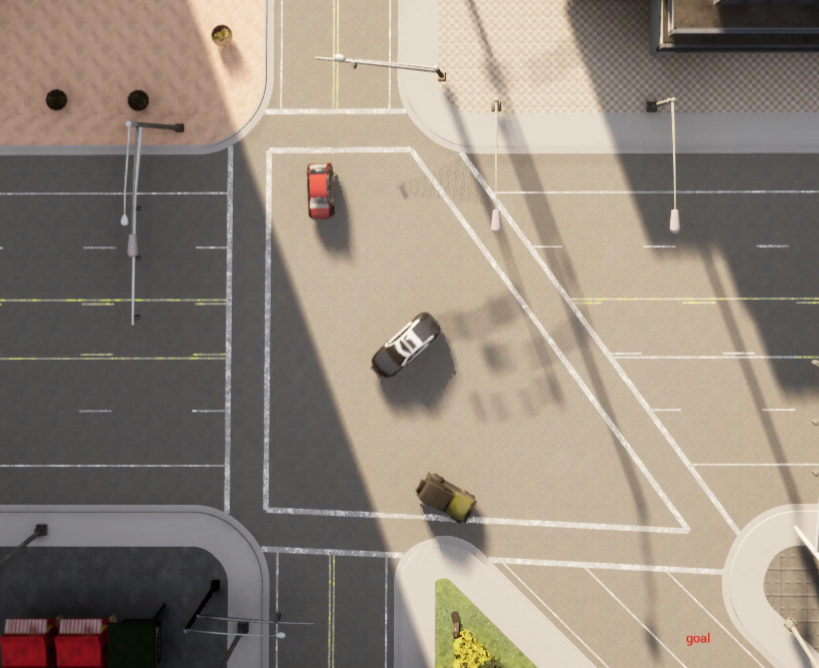} }}\label{fig:Star_2}
    \subfloat[t = 4]{{\includegraphics[width=0.185\textwidth]{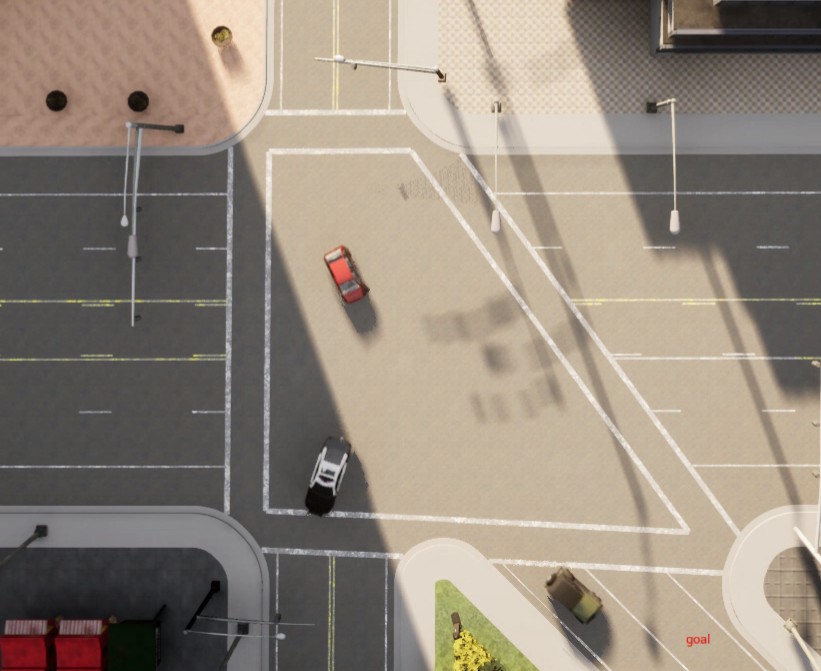} }}\label{fig:Star_4}
    \subfloat[t = 6]{{\includegraphics[width=0.185\textwidth]{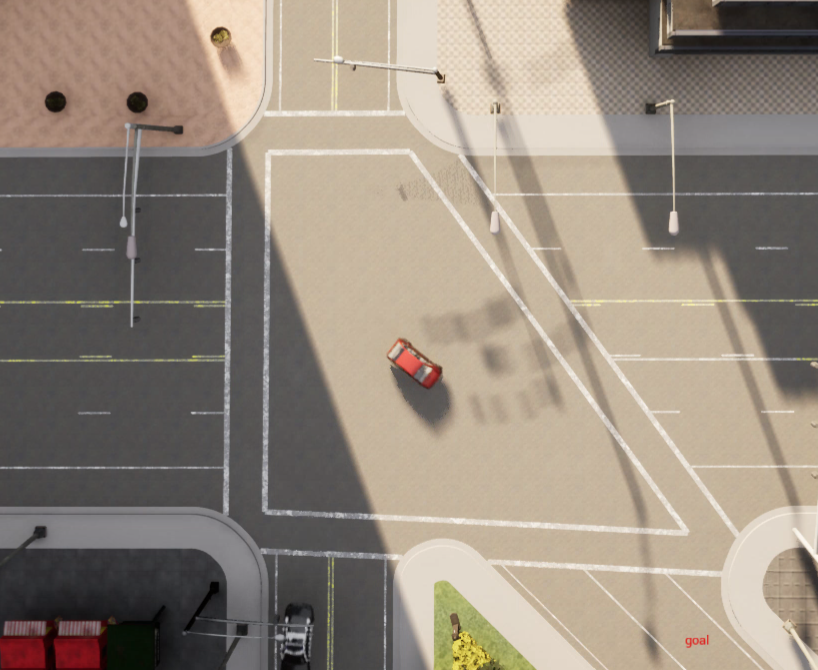} }}\label{fig:Star_6}
    \subfloat[t = 8]{{\includegraphics[width=0.185\textwidth]{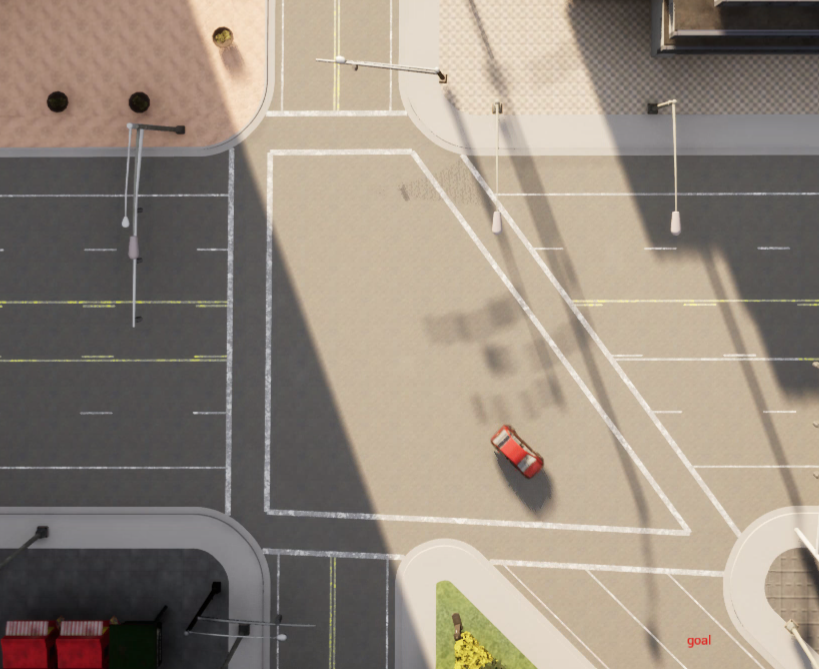} }}\label{fig:Star_8}
    \caption{Simulation of the nominal planner in the star-intersection scene. Shrinking-horizon planning starts at t=0. 
    }
    \label{fig:nominal_Star_snapshot}
\end{figure*}


At each planning step $\tau$, we minimize the following quadratic cost function
\begin{equation}\label{eq:MPCcost}
    \cJ(\vu_\tau, \vx_\tau) = \norm{ {x}_{T_{OH}} - {x}_\text{goal} }_P^2
    + \sum^{T_{OH} -1}_{t=0} \left(\norm{ u_t }_{R_1}^2 + \norm{\Delta u_t}_{R_2}^2 \right),
\end{equation}
where $\Delta u_t = u_{t+1} - u_t$ and 
$P = \begin{bmatrix}
300.0 & 0\\
0 & 300.0
\end{bmatrix}, \;\; R_1 = \begin{bmatrix}
0.05 & 0.02\\
0.02 & 0.10
\end{bmatrix}, \;\; R_2 = \begin{bmatrix}
0.05 & 0.01\\
0.01 & 0.20
\end{bmatrix}$.
The first term of the objective function minimizes the terminal state distance to the goal state. To promote smooth and stable driving behavior, the second term minimizes both the magnitude of acceleration and steering interventions and the differences in control inputs across consecutive time steps.

 
The prediction model \textit{Trajectron++} \cite{Salzmann2020} generates \(N_s\) samples of OV trajectory predictions. Each sample has an associated discrete latent variable, which encodes an associated GMM mode. We determine the number of modes based on the discrete latent variables' values that exceed a certain threshold; for example, $K=2$ at the beginning of shrinking-horizon planning in the T-intersection scene, and $K_1=3, K_2=3$ at the beginning of shrinking-horizon planning in the star-intersection scene. Then, we cluster the samples according to their latent variable's value, obtaining $N_k$ samples for each mode $k$ and estimating the GMM moments from the $N_k$ samples. The risk tolerance for the chance constraint \eqref{eq:MPCchance} is $\epsilon=0.05$.

 Under the predictions from \textit{Trajectron++}, Assumption \ref{assumption:VaRshrink} does not always hold. 
 The predictions corresponding to OV-turning exhibit substantial mean shifts for consecutive time steps, making the fulfillment of Assumption~\ref{assumption:VaRshrink} impractical. Also, the covariance of predictions occasionally exhibits irregular increases or decreases in magnitude. These violations make it difficult to validate Proposition \ref{alg:MPCplanning} with the prediction model. Improving the predictions' quality can help alleviate these violations, which could be considered as future work.

 We implement the nominal, robust, and contingency MPC planners, that is, Algorithm~\ref{alg:MPCplanning} with $\cP$ set to \eqref{problem:CLCCTPproblem}, \eqref{problem:RobustMPC}, and \eqref{problem:contingencyMPC}, respectively. For contingency planning, we use a coinciding horizon $T_c = 1$ and plan $L$ trajectories. Each plan accounts for different combinations of chance constraints. For instance, when OV$_1$ has three modes and OV$_2$ has two modes, we plan $L=3$ trajectories where $\vu^{(1)}$ handles the chance constraints of the first modes of OV$_1$ and OV$_2$, $\vu^{(2)}$ the second modes of OV$_1$ and OV$_2$, and $\vu^{(3)}$ the third mode of OV$_1$.  
 
 We conducted 100 trials of simulations for each scenario; in each trial, shrinking-horizon planning is initially feasible. In each trial, the OVs start from the same initial states, yet their accelerations, steering inputs, and intentions (e.g., going straight or turning right at the T-intersection) vary in different trials. One exemplary trial of the nominal planner for the T-intersection scene is shown in Fig.~\ref{fig:nominal_T_snapshot} and for the star-intersection scene in Fig.~\ref{fig:nominal_Star_snapshot}. We compare the performance of the three planners introduced in Section~\ref{section:unifiedPlanning} with several criteria, and the results are summarized in Table \ref{tab:performance}.


\textbf{Open-loop \& closed-loop planning:} We first evaluate the improvement brought by employing our MPC framework in comparison to open-loop planning. In particular, the open-loop planner tries to generate a trajectory by solving the trajectory planning problem just once at the very initial planning step, i.e., solving either nominal planning \eqref{problem:CLCCTPproblem} or contingency planning \eqref{problem:contingencyMPC} with $\tau = 0$ and $T_{OH} = T$ that is the required planning horizon length to reach the goal. The open-loop planning scheme cannot find a feasible solution in all scenes due to the highly uncertain OV prediction over a long planning horizon (as shown in the last row of Table~\ref{tab:performance}). On the other hand, the MPC framework overcomes this infeasibility by gaining new observations of the OVs' motion and correspondingly updating the OV predictions and EV trajectories at each planning step.

\textbf{Feasibility:} As for the feasibility criteria, we count the number of trials where each planner returns Infeasible before the EV reaches the goal. For example, if the feasibility is 93\% (nominal planner in the T-intersection scene), it indicates that the nominal planner generates a closed-loop trajectory in 93 trials out of the total 100 trials. 

We observe that both the nominal and contingency planners have a high feasibility rate in the two scenes, while the robust planner has a lower feasibility rate. As we noted earlier in this section, the robust planner does not have a recursive feasibility guarantee because the predictions from \textit{Trajectron++} do not satisfy Assumption~\ref{assumption:VaRshrink}.
\revised{The contingency planner has a higher feasibility rate than the nominal planner in both scenes. 
In cases where the contingency planner returns infeasible, we observe that the EV tends to make aggressive maneuvers in the early steps of MPC and fails to find a feasible trajectory in the subsequent steps.}
The robust planner cannot find a feasible closed-loop solution in the star-intersection scene due to the high interaction and highly uncertain predictions of the OVs. 



\textbf{Optimality:} As for the optimality criteria, we evaluate the average cost and the traveled time. The cost value of each feasible trial is calculated based on \eqref{eq:MPCcost} over the planning horizon until the EV reaches the goal. In the column of ``\textbf{Cost}" of Table~\ref{tab:performance}, we report the average cost values over feasible trials. The travel time refers to the average time taken for the EV to reach the goal from its initial state. We summarize the average travel time among the feasible trials in the column of ``\textbf{Travel time}" of Table~\ref{tab:performance}. Our objective is to minimize the cost, and typically, a lower cost results in fewer planning steps needed to reach the goal. 

In both scenes, the contingency planner has the best performance in terms of minimizing the cost, followed by the nominal planner. The robust planner generates trajectories with much higher cost and travel time. For example, at the T-intersection, the EV driven by the robust planner fully stops to wait for the OVs to exit the intersection, thereby taking a long time to reach the goal, whereas the EV only slows down with the nominal and contingency planners.

\revised{\textbf{Empirical collision rate:} For the safety criteria, we evaluate a collision rate based on $10^4$ new OV parameters sampled from the GMM distribution. The data is sampled from the time step corresponding to the EV trajectory planning. For instance, 
to verify the safety of $x_{1|0}$, we generate samples from \((\mu_{ijk}^{1|0}, \Sigma_{ijk}^{1|0}), \; \forall i, j, k\) to empirically measure the probability of collisions. We report the average collision rate over the feasible trials in the column of \textbf{Coll. Rate} of Table~\ref{tab:performance}. We observe that whenever a feasible closed-loop EV trajectory can be found, the probability of constraint violation is 0\%, despite setting a constraint violation threshold of $\epsilon=5\%$. This outcome aligns with our expectations, as the deterministic reformulations \eqref{eq:chanceGMM}, \eqref{eq:worstVaRconstraint}, and \eqref{cst:contSafety} are conservative approximations of the original chance constraints \eqref{eq:MPCchance}.
}

\begin{table}[b]
    \caption{Performances in the intersection scenarios} \vspace{-2mm}
    \begin{center} 
    \begin{tabular}{| l | l | r | r | r | r | r |}
        \hline
        \textbf{\shortstack[c]{\vspace{1.5mm} \; Method}} & \textbf{\shortstack[c]{\vspace{1.6mm}Scene}} & \textbf{\shortstack[c]{\vspace{1mm}Feasibility}} & \textbf{\shortstack[c]{Travel \\
        time}} & \textbf{\shortstack[c]{\vspace{1.5mm}Cost}} & \revised{\textbf{\shortstack[c]{Coll. \\
        Rate}}} & \textbf{\shortstack[c]{Comp. \\
        time}}  \\ 
        \hline
        \multirow{2}{*}{\shortstack[l]{Nominal \\
        ($\cP = \eqref{problem:CLCCTPproblem}$)}}  
        & T    & \revised{93\%} & \revised{6.88s} & \revised{4587} & \revised{0 \%} & \revised{3.05s} \\ \cline{2-7}
        & Star & \revised{92\%} & \revised{5.20s}  & \revised{5588} & \revised{0 \%} & \revised{4.65s} \\ \hline
        \multirow{2}{*}{\shortstack[l]{Robust \\
        ($\cP = \eqref{problem:RobustMPC}$)}}  
         & T  & \revised{85\%} & \revised{8.32s} & \revised{9394} & \revised{0 \%} & \revised{2.89s} \\ \cline{2-7}
        & Star& 0\% & - & -  & - & - \\ \hline
        \multirow{2}{*}{\shortstack[l]{Contingency \\
        ($\cP = \eqref{problem:contingencyMPC}$)}}  
         & T  & \revised{97\%} & \revised{6.62s} & \revised{4531} & \revised{0 \%} & \revised{14.1s} \\ \cline{2-7}
        & Star& \revised{95\%} & \revised{5.10s} & \revised{4839} & \revised{0 \%} & \revised{6.84s} \\ \hline \hline
        \rule{0pt}{10pt} \shortstack[l]{\hspace{-1mm}Open-loop \vspace{0.5mm}} & \shortstack{\vspace{1mm}All} & \shortstack{\vspace{1mm} 0\% } & \shortstack{\vspace{1mm} -} & \shortstack{\vspace{1mm} -} & \shortstack{\vspace{1mm} -} & \shortstack{\vspace{1mm} -}  \\ \hline
    \end{tabular}
    \end{center}
    \label{tab:performance} 
\end{table}



\textbf{Computational time:} To evaluate how close our methods are to real-time applicability, we measure the worst optimization solver time taken to solve $\cP$ of Algorithm \ref{alg:MPCplanning} in each trial. We report the average worst time over 100 trials in the column of ``\textbf{Comp. time}" in Table~\ref{tab:performance}. The contingency planner has the longest computational time as it plans $L$ trajectories simultaneously.

Given that we solve the trajectory planning problem at a frequency of every 0.5\,s, the computational time of our planners cannot satisfy the real-time applicability. In comparison with the previous chance-constrained MPC framework under GMM uncertainty in \cite{Nair2022}, which offers real-time applicability, a key difference is that in \cite{Nair2022}, the lane-keeping behavior is tackled by planning a trajectory that closely aligns with a reference trajectory that is computed offline. In contrast, our approaches incorporate lane-keeping as a constraint, where we force the state to stay within (curved) lane boundaries described by the union of convex sets. \revised{Moreover, we considered polyhedral obstacles for collision avoidance, which necessitated additional integer variables to reformulate the disjunctive constraints. An optimization-based method \cite{Zhang2021} or locally linear approximations of the ellipsoidal obstacle constraints \cite{Nair2022} can reduce the integer variables required for collision avoidance and further decrease the computational cost.}
We believe that the computation time can be further reduced by reformulating the lane-keeping \revised{and collision avoidance} constraints and optimizing the codes, which remains as a future work.

In summary, the realistic simulations indicate that our proposed methods can ensure safety while conservativeness and/or computation time can be a tradeoff. In particular, the robust planner offers a provable recursive feasibility guarantee but tends to generate conservative trajectories. The less conservative planning of the contingency planner comes at the price of increased computation time. The nominal planner appears to offer both recursive feasibility and safety throughout our simulation trials.

\section{Conclusion}
We presented a chance-constrained MPC framework that considers GMM uncertainty. With assumptions on the propagation of the uncertainty's GMM prediction, we ensured that a robust MPC planner is recursively feasible and safe with respect to the chance constraint throughout the entire planning horizon. A less conservative chance-constrained MPC planner was designed based on a contingency planning method, which reduces the conservativeness under GMM uncertainty by planning multiple trajectories simultaneously. Through a case study using a cutting-edge autonomous driving simulator, we observed that the MPC framework removes the conservatism of open-loop controllers in finite-horizon trajectory planning problems. While the contingency planning approach enhanced the trajectory's optimality, it came at the expense of extended computational time. Our planning framework requires further improvement in terms of computation time to support real-time applications. Moreover, enhancing the trajectory prediction quality of other vehicles can reduce the conservativeness of risk-constrained problems.

\bibliographystyle{IEEEtran}
\bibliography{main}

\appendices

\revised{\section{Selection of Risk-constrained Formulations \secrevised{for a Simple Example}} \label{Appendix:risk-constrained}}
\revised{
 We investigate six different risk-constrained methods assuming the uncertainty's distribution is fully known, partially known, or completely unknown. We evaluate the performances of the risk-constrained approaches based on the following numerical risk-constrained optimization problem. }
\begin{subequations} \label{simulation:numerical}
\begin{alignat}{2}
    & \underset{x \in \mathbb{R}}{\text{min}} && \;\;\; x, \\ 
    & \text{subject to} && \;\;\; \mathbb{P}(\delta \leq x) \geq 1-\epsilon, \label{eq:single-risk}
\end{alignat}
\end{subequations}
where $x\in\mathbb{R}$ is a decision variable and $\delta\in \mathbb{R}$ is a random variable with a GMM distribution $\delta \sim \sum_{k=1}^{K} \pi_k \cdot \cN (\mu_{k}, \Sigma_{k})$. For the risk metric, we consider either the original chance constraint \eqref{eq:single-risk} (abbreviated as CC) or a CVaR constraint, which is defined as
        \begin{equation}\label{simulation:cvarnumerical}
        {\text{CVaR}_{\epsilon}}\left(\delta - x\right) := 
        \inf_{\alpha>0} \left\{ \frac{1}{\epsilon} \mathbb{E}[(\delta-x+ \alpha)_+] - \alpha \right\} \leq 0.
        \end{equation}

 Note that the CVaR constraint \eqref{simulation:cvarnumerical} is a conservative approximation of the chance constraint \eqref{eq:single-risk} \cite{Rockafellar2001}.

First, we consider the case when the exact GMM moments are known. The following risk-constrained methods can be employed to address uncertainty.
\begin{enumerate}[leftmargin=26pt, label=(M\arabic*)]
    \item \label{num:MTA} CC-Moment trust approach (MTA) \cite{Hu2022}: When the exact GMM moments are known, the chance constraint \eqref{eq:single-risk} can be equivalently formulated as the following constraints.
    \begin{equation*} \label{cha1:eq:numericalBi}
        \begin{cases}
        \Psi^{-1}(1-\epsilon_k) {\sigma}_k + {\mu}_k \leq x, \;\; \forall k\\
        \sum_k \pi_k \cdot \epsilon_k= \epsilon.
        \end{cases}
    \end{equation*}

    \item \label{num:ICP} CVaR-MTA \cite{You2021GMMCVaR}: 
        When the exact GMM moments are known, an iterative cutting plane algorithm \cite{You2021GMMCVaR} is designed to converge to a solution $x^*$ that satisfies \eqref{simulation:cvarnumerical}.   
\end{enumerate}

 Second, we consider the situations where $\delta$ is known to follow a GMM with $K$ modes, but the precise GMM moments are unknown, and only $N_s$ independent and identically distributed (i.i.d.) samples of $\delta$ are accessible from the true GMM distribution, denoted as $\delta^{(1)}, \ldots, \delta^{(N_s)}$. Given the number of modes, we can also determine mode $k$ where each sample belongs based on clustering methods \cite{bishop2006}. We use $\delta_k^{(n)}$ to denote a sample number $n$ that belongs to mode $k$. In this case, the following methods can be applied to manage risk under uncertainty.

  \begin{enumerate}[leftmargin=26pt, label=(M\arabic*)]
    \setcounter{enumi}{2}
    \item \label{num:MRA} CC-Moment robust approach (MRA) \cite{Ren2023}: The moments of the $k^{th}$ GMM mode can be estimated from the samples. MRA robustifies the estimated moments $(\hat{\mu}_k,\hat{\sigma}_k)$ and formulates
    \begin{equation*}
        \begin{cases}
        \Psi^{-1}(1-\epsilon_k) \hat{\sigma}_k \sqrt{1+r_{2,k}} + \hat{\mu}_k + r_{1,k} \leq x, \;\; \forall k \\
        \sum_k \pi_k\cdot \epsilon_k = \epsilon,
        \end{cases}
    \end{equation*}
    where $r_{1,k}$ and $r_{2,k}$ quantifies the error bounds between the estimate and actual moments (see \cite{Ren2023}). This approach provides a guarantee of the satisfaction of the chance constraint \eqref{eq:single-risk} with $1-2\beta$ probability, where $\beta$ is a prescribed tolerance.

    \item \label{num:scenario} CC-scenario approach \cite{Ahn_2022}: With the prior information on the number of modes and the mode $k$ where each sample belongs, the scenario approach formulates
    \[
    \begin{aligned}
        x \geq \max_{i} \left(\delta_k^{(i)} \right), \;\;\; \forall k. \\
    \end{aligned}
    \]
    The solution of this constraint is guaranteed to satisfy the chance constraint \eqref{eq:single-risk} with at least $1-\beta$ probability, where $\beta$ is determined by the prediction sample size $N_s$ and chance constraint risk bound $\epsilon$.
\end{enumerate}

 Third, we consider the case when there is no prior information on the distribution of $\delta$ and only $N_s$ i.i.d. samples of $\delta$ drawn from the true distribution are available. In this case, the following CVaR-constrained approaches can be employed. 

\begin{table*}[!ht]
\centering
\begin{minipage}{\textwidth}
    \caption{Summary of risk-constrained approaches under GMM uncertainty}
    \begin{center} \vspace{-2mm}
    \begin{tabular}{| l | l | l | r | l | c | c |} 
        \hline
       \rule{0pt}{18pt}  & \thead{\shortstack[l]{\vspace{1.5mm}\textbf{Risk measure}}} & \thead{\shortstack[l]{\vspace{1.5mm}\textbf{Method}}} & \thead{\shortstack[c]{\vspace{1mm}\textbf{Comp. time (s)}}} & \thead{\textbf{\shortstack[l]{\vspace{1mm}Risk constraint guarantee}}} \\ 
        \hline 
        \multirow{2}{*}{\shortstack[l]{Known \\ GMM}} 
        & \multirow{1}{*}{\shortstack[c]{Chance}} 
        & 
        \ref{num:MTA}
        Moment Trust Approach \cite{Hu2022} & $2.60 \times 10^{-1}$ 
        & Deterministic  \\
        \cline{2-5}
        & \multirow{1}{*}{\shortstack[c]{CVaR}}  
        & 
        \ref{num:ICP}
        Iterative Cutting-plane \cite{You2021GMMCVaR} & $6.39 \times 10^{0} $ 
        & None \\
        \cline{2-5}
        \hline\hline
        \multirow{5}{*}{\shortstack[l]{Data-driven \\ Case}}  
        & \multirow{2}{*}{\shortstack[c]{Chance}} 
        & 
        \ref{num:MRA} 
        Moment Robust Approach \cite{Ren2023} & $2.47 \times 10^{-1}$ 
        & Probabilistic \\ 
        \cline{3-5}
        & & 
        \ref{num:scenario} 
        Scenario Approach\cite{Ahn_2022} & $4.28 \times 10^{-5}$ 
        & Probabilistic \\ 
        \cline{2-5}
        & \multirow{2}{*}{\shortstack[c]{CVaR}} 
        & 
        \ref{num:SAA} 
        Sample Average Approximation \cite{Hakobyan2019} & $3.81 \times 10^{0} $  
        & None \\
        \cline{3-5}
        & & \ref{num:WassDR}
        Distributionally Robust \cite{Esfahani2015DRO} & $3.00 \times 10^{0} $ 
        & Probabilistic \\
        \hline
    \end{tabular}
    \end{center}
    \label{tab:OLTPmethodsSummary}
\end{minipage}
\end{table*} 

\begin{enumerate}[leftmargin=26pt, label=(M\arabic*)]
\setcounter{enumi}{4}

    \item \label{num:SAA} CVaR-sample average approximation (SAA) method \cite{Hakobyan2019}: With $N_s$ samples of $\delta$, the SAA method estimates the CVaR constraint as
    \begin{subequations} \label{problem:CVaRCPSAA}
    \begin{alignat*} {2}\frac{1}{\epsilon N_s} \sum_{i=1}^{N_s} [(\delta^{(i)} - x +\alpha)_+] - \alpha \leq 0.  \label{constraint:CVaRSAA}
    \end{alignat*}
\end{subequations}

    \item \label{num:WassDR} CVaR-distributionally robust (DR) approach \cite{Esfahani2015DRO} with Wasserstein-distance ambiguity set: The Wasserstein-distance DR CVaR constraint is formulated as
    \[
    \begin{aligned}
        \underset{Q \in \mathcal{Q}_\gamma(\hat{P}_{N_s})}{\text{sup}} \text{CVaR}_{\epsilon}^Q(\delta - x) \leq 0,
    \end{aligned}
    \]
    where $\hat{P}_{N_s}$ is an empirical distribution of $\delta$ estimated with $N_s$ samples of $\delta$. Also, $\mathcal{Q}_\gamma(\hat{P}_{N_s})$  is the set of distributions whose Wasserstein distance from $\hat{P}_{N_s}$ is less than or equal to $\gamma$. This approach provides a satisfaction guarantee of the chance constraint \eqref{eq:single-risk} with at least $1-\beta$ probability, where $\beta$ depends on the Wasserstein radius $\gamma$, sample size $N_s$ and the underlying distribution of $\delta$.

\end{enumerate}


To compare the optimality, safety and computation time, we evaluate the above-mentioned risk-constrained methods on problem \eqref{simulation:numerical}, subject to uncertainty with a bi-modal Gaussian distribution $\delta \sim 0.5\mathcal{N}(1, 1)+0.5\mathcal{N}(10, 1)$. 

We use the approaches \ref{num:MTA}-\ref{num:WassDR} to solve the above risk-constrained problem for 100 times. At each time, we draw $N_s=2000$ i.i.d. samples of $\delta$ from the GMM, and we cluster the samples into two modes and estimate the GMM moments $\{(\hat{\mu}_k, \hat{\sigma}_k)\}_{k=1}^2$. When the GMM moments are assumed to be known, particularly for methods \ref{num:MTA} and \ref{num:ICP}, we trust the sample-estimated GMM moments. For \ref{num:MRA}, we robustify against the moment estimation error. For methods \ref{num:scenario}, \ref{num:SAA} and \ref{num:WassDR}, we directly use the samples to solve the risk-constrained problem. In each simulation, we obtain the optimal solutions $x^*$ based on the above-mentioned methods. After obtaining the optimal solution $x^*$, we evaluate the constraint violation probability based on the $10^4$ new test samples of $\delta$ drawn from the true distribution, i.e., we count the rate of $\delta^n_{\text{test}} \geq x^*$ among the new test samples $n = 1, \ldots, 10^4$. 

We set the risk tolerance to be $\epsilon = 0.05$. In methods \ref{num:MTA} and \ref{num:MRA}, where we need to assign the risk to different GMM modes, we uniformly assign $\epsilon_k = \epsilon$ for all $k\in \bbZ_{1:K}$. For the CVaR-DR method \ref{num:WassDR}, we choose a Wasserstein radius of the ambiguity set to be $\gamma = 0.04$. 

Note that problem \eqref{simulation:numerical} aims at minimizing $x$, and a larger solution is safer to the risk constraint, as $\mathbb{P} (\delta \leq x)$ increases with the value of $x$. Hence, ``less conservative" and ``lower cost" will interchangeably describe a smaller $x$.

\begin{figure}[!t] \vspace{-3mm}
    \centering
    \includegraphics[width=0.45\textwidth]{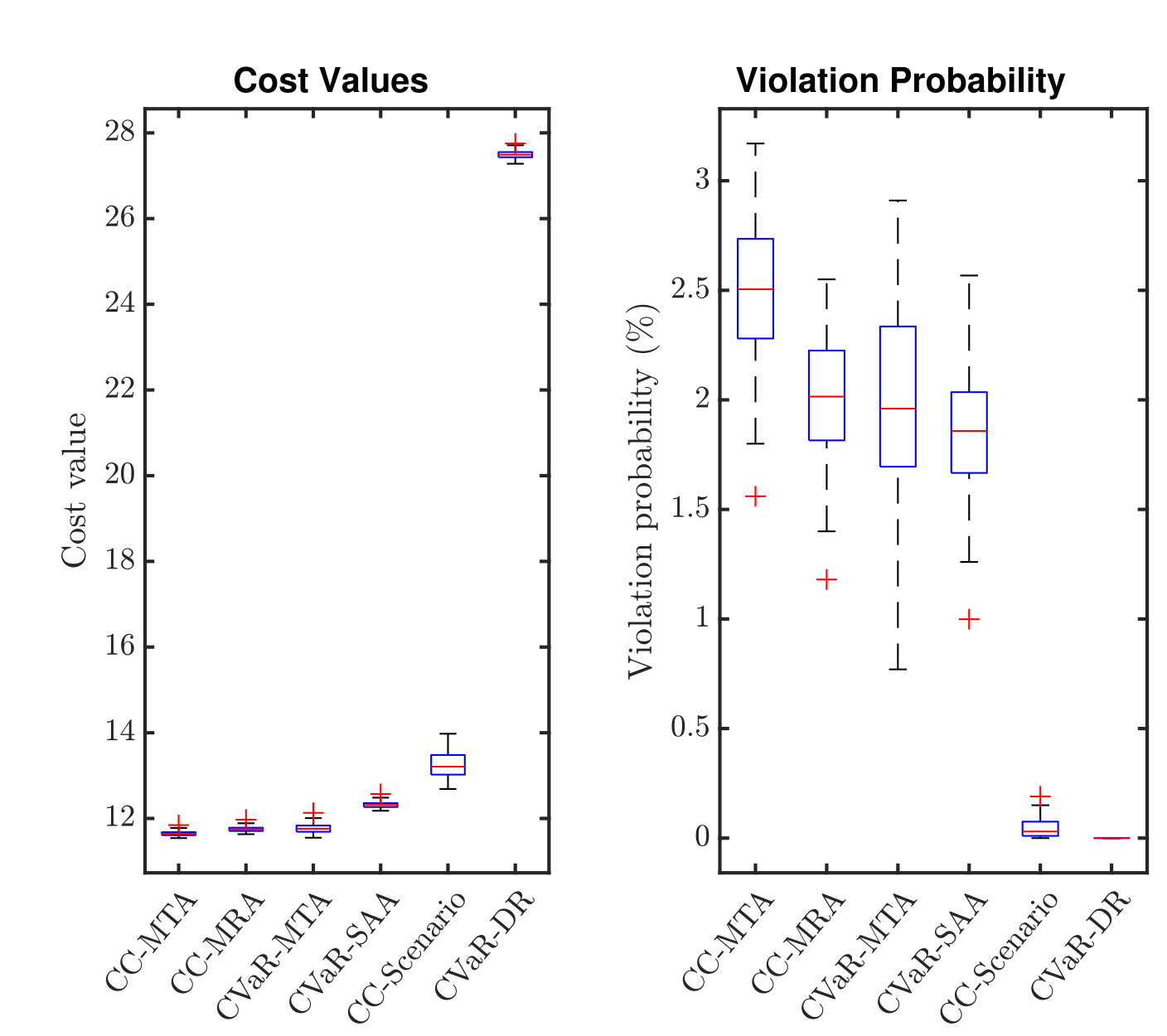}
    \caption{Optimality (left) and constraint violation (right) performances of the risk-constrained optimization methods.}
    \label{fig:GMMvsDR}
\end{figure}

The optimal solutions generated by different risk-constrained approaches and the empirical constraint violation probability over the 100 simulations are shown on the left and the right of Figure \ref{fig:GMMvsDR}, respectively. Based on the results, \ref{num:MTA} and \ref{num:MRA}, which rely on the assumption that the uncertainty conforms to GMM, achieve low costs yet have a large but feasible (i.e., less than or equal to $\epsilon = 5\%$) constraint violation probability. CVaR-constrained methods \ref{num:ICP} and \ref{num:SAA} yield more conservative solutions with a reduced probability of constraint violation. The CC-scenario method \ref{num:scenario} yields solutions that lead to a slightly higher cost than the CVaR-constrained framework. The CVaR-DR approach \ref{num:WassDR} generates much more conservative results than the other methods with a zero constraint violation probability.

The average computational time of the different risk-constrained methods over the 100 simulations is provided in the ``\textbf{Comp. time}" column of Table~\ref{tab:OLTPmethodsSummary}. The CVaR-constrained methods \ref{num:ICP} and \ref{num:SAA} significantly prolong the computation time. The CVaR-DR approach \ref{num:WassDR} in general takes longer than chance-constrained methods, whereas the CC-scenario method \ref{num:scenario} has the lowest computational time among all risk-constrained approaches. This trend is consistent with the observations when using these risk-constrained methods in open-loop trajectory planning problems.



\end{document}